\documentclass[11pt]{article}

\usepackage[final]{acl}
\usepackage{amsmath,amssymb,amsthm,bm,amsfonts}
\usepackage{algorithm, algorithmic}
\usepackage{graphicx,subfig}
\usepackage{multirow}
\usepackage{textcomp}
\usepackage{url,xurl}
\usepackage{color, colortbl}
\usepackage{float, stfloats}
\usepackage{pifont,wasysym}
\usepackage{threeparttable}
\usepackage{soul}
\usepackage{enumitem}
\usepackage{makecell}
\usepackage{verbatim}
\usepackage{xspace}
\usepackage[many]{tcolorbox}
\usepackage{tabularx}
\usepackage{ragged2e}
\usepackage[table,dvipsnames]{xcolor}
\usepackage{enumitem}
\usepackage{svg}
\usepackage{booktabs}
\tcbuselibrary{breakable}
\usepackage[export]{adjustbox}
\usepackage{pgf}
\usepackage{newtxtt}
\usepackage{times}
\usepackage{latexsym}

\usepackage{hyperref}

\newtheorem{lemma}{Lemma}

\newtheorem{proposition}{Proposition}

\usepackage[T1]{fontenc}

\usepackage[utf8]{inputenc}

\usepackage{microtype}

\usepackage{inconsolata}

\usepackage{graphicx}

\newcommand{\aegis}{\textbf{\texttt{AEGIS}}}

\title{\textit{Beyond Over-Refusal}: Defending Indirect Prompt Injection via Latent Instruction Manifolds}

\author{
    Jiahao Chen\textsuperscript{1},
    Rui Yin\textsuperscript{1},
    Xinfeng Li\textsuperscript{2},
    Qianli Ma\textsuperscript{3},
    Tianyu Du\textsuperscript{1} \AND
    Zhihui Fu\textsuperscript{4},
    Jun Wang\textsuperscript{4},
    Zhaoxiang Wang\textsuperscript{4},
    Shouling Ji\textsuperscript{1} \\
    \\
    \textsuperscript{1}Zhejiang University,
    \textsuperscript{2}Hong Kong Polytechnic University,\\
    \textsuperscript{3}University of Science and Technology of China,
    \textsuperscript{4}OPPO Research Institute \\
    \texttt{\{xaddwell,ruiyin,zjradty,sji\}@zju.edu.cn},
    \texttt{qianli-ma@mail.ustc.edu.cn}, \\
    \texttt{\{lxfmakeit,hzzhzzf,junwang.lu,steven.wangzx\}@gmail.com}
}

\begin{document}
\maketitle
\begin{abstract}
Large Language Models (LLMs) have been integrated into complex ecosystems (e.g., Code Agents), while the Indirect Prompt Injection (IPI) attacks have emerged as critical barriers to their safe deployment. Attackers exploit LLMs' indistinguishability between ``instructions'' and ``data'' to manipulate LLMs via maliciously injected instructions. Existing defenses, however, face an intractable safety-utility trade-off: most guardrails either incur high latency or suffer from severe \textit{over-refusal}. In this paper, we first demonstrate that LLMs can separate \textbf{instruction} from \textbf{data} intrinsically with both theoretical and empirical evidence. Inspired by this insight, we propose \aegis{} (\textbf{A}daptive \textbf{E}nsemble \textbf{G}uard for \textbf{I}njection \textbf{S}hielding). \aegis{} extracts instruction-sensitive projectors to identify malicious instructions and leverages a \textbf{Unified Multi-Layer Consensus} mechanism that aggregates topologically distinct signals across the network depth. Empirical evaluations show that \aegis{} achieves remarkable detection performance against both heuristic and optimization-based attacks compared to baselines, highlighting its potential to mitigate IPI. Code is available at \href{https://github.com/xaddwell/AEGIS}{Github}
\end{abstract}

\section{Introduction}
\label{sec:intro}
The development of Large Language Models (LLMs) from passive text generators to active agents has demonstrated remarkable performance across a wide range of Natural Language Processing (NLP) tasks~\cite{taori2023stanford, ouyang2022training}. By integrating with Retrieval-Augmented Generation (RAG) and external tool APIs (e.g., search engines), LLMs now function as core components within complex application ecosystems, capable of executing code and managing user data~\cite{shi2025prompt, zhang2025realistic}. However, this integration exposes a severe vulnerability within LLMs~\cite{willison2022prompt}: the lack of explicit distinction between instructions (control flow) and knowledge (external inputs) within the context.

This ambiguity enables Indirect Prompt Injection (IPI) attacks, where adversaries embed malicious commands into retrieved data streams, such as web pages or emails, to hijack the model's behavior~\cite{willison2022prompt, zhang2025realistic}. Such attacks can lead to severe consequences, including data exfiltration~\cite{lin2026sope}, unauthorized actions, and the propagation of harmful content~\cite{hui2024pleak, collu2025publish}. The demand for robust defense mechanisms has become paramount.

Existing countermeasures, however, struggle to reconcile the \textit{Safety-Utility} trade-off. IPI Sanitization approaches, such as structured queries~\cite{chen2025struq} and token-wise filtering~\cite{geng2025pisanitizer, liu2025drip}, attempt to neutralize threats by restructuring or scrubbing the input stream.  Since these methods necessitate the rigorous filtering of every input, they impose large utility penalties, incurring high computational overheads and often degrading the semantic integrity of complex user queries. Conversely, IPI Detection guardrails aim to identify and block malicious inputs.  However, whether relying on external classifiers~\cite{alon2023detecting, pibert} or internal steering mechanisms~\cite{hung2025attention}, these methods suffer from severe \textit{over-refusal}~\cite{li-etal-2025-piguard} and compromise the model's general availability and helpfulness.

In this work, we challenge the previous assumption that the conflation of instruction and data is indistinguishable for LLM~\cite{zverev2024can}. We argue that the key to resolving this trade-off lies in the intrinsic geometry of the LLM's latent space~\cite{flas2026}, shaped by its two-stage training paradigm. We posit that: Pre-training forces knowledge representations to occupy a high-dimensional, high-entropy manifold to maximize information capacity, whereas \textbf{Instruction Tuning} collapses instruction representations into a separated, compact operator subspace~\cite{park2024linear, zou2023representation}. Consequently, detection should not rely on black-box classification, but on identifying signatures within this compact instruction manifold.

Inspired by these insights, here we introduce \aegis{} (\textbf{A}daptive \textbf{E}nsemble \textbf{G}uard for \textbf{I}njection \textbf{S}hielding), a lightweight detection framework that secures LLMs without training. \aegis{} distinguishes malicious instructions from benign data by filtering out superficial phrasing differences to isolate the core ``imperative intent.'' To further ensure robustness without over-refusal, we introduce a \textit{Unified Multi-Layer Consensus} mechanism. Instead of relying on a single layer, which can be noisy, \aegis{} aggregates evidence across multiple network depths. This ensures that an input is flagged as an attack only if it exhibits a persistent malicious signal throughout the model's processing, eliminating false positives caused by transient noise. Our contributions are summarized as:
\begin{itemize}
    \item We characterize the geometric relationship between instruction and knowledge representations, effectively distinguishing instructions from the high-entropy manifold of the untrusted external knowledge.
    \item We propose \aegis{} that combines distribution-aware threshold calibration with a multi-layer voting mechanism. Beyond empirical efficacy, we derive a \textbf{theoretical error bound} for the False Positive Rate (FPR) and provide a geometric analysis for the ensemble's robustness radius against latent perturbations.
    \item Extensive evaluations on eight IPI attacks with ten baselines demonstrate that \aegis{} achieves remarkable detection performance. Crucially, it improves the safety-utility balance under standard benign workloads, reducing the FPR on benign tasks to negligible levels ($<1\%$).
\end{itemize}

\section{Related Work}
\label{sec:related_works}
\subsection{Instruction Tuning}
Instruction Fine-Tuning is pivotal in aligning pre-trained LLMs with human intent, transitioning them from next-token predictors to helpful assistants~\cite{taori2023stanford, ouyang2022training}. This process typically involves supervising the model on pairs of instructions and desired outputs. While SFT enhances utility and instruction-following capabilities, recent studies suggest it fundamentally alters the geometry of the model's latent space~\cite{zou2023representation}. Previous works have further questioned how well LLMs can intrinsically separate instructions from data~\cite{zverev2024can}. We posit that SFT collapses the representations of imperative instructions into a low-dimensional, compact operator subspace, distinct from the high-entropy manifold of general knowledge acquired during pre-training~\cite{park2024linear}, with more details given in Section~\ref{sec:methodology}.

\subsection{Indirect Prompt Injection Attacks}
IPI attacks exploit the intrinsic inability of LLMs to distinguish between instructions and knowledge~\cite{willison2022prompt, chen-etal-2025-indirect}. These attacks have evolved from manual tricks to complex optimizations targeting various LLM components.

\noindent\textbf{Heuristic-based Attacks.} Early adversarial attempts relied on manual ingenuity and semantic patterns. Attackers employ role-playing scenarios (e.g., ``DAN''), cognitive hacking, or translation wrappers to bypass safety alignment~\cite{perez2022ignore, liu2024promptinjection}. Common techniques include ``Context Ignoring,'' where the model is commanded to disregard prior instructions, and ``Fake Completion,'' which mimics the start of a malicious response to induce non-compliance~\cite{willison2022prompt}. While effective against naive models, these patterns are often brittle and can be mitigated through robust system prompts or supervised safety tuning.

\noindent\textbf{Optimization-based Attacks.} Some adversaries leverage automated optimization to generate adversarial suffixes. Gradient-based methods, such as GCG~\cite{zou2023universal} and activation-guided MCMC sampling~\cite{li-etal-2025-transferable}, search for token sequences that maximize the probability of affirmative responses. Variations like Universal Injection~\cite{liu2024automatic} aim for transferability across models. Beyond text generation, recent works utilize execution-based triggers like NeuralExec~\cite{pasquini2024neural} or target specific applications, such as Tool Selection in agents~\cite{shi2025prompt} and GUI-based agents~\cite{zhang2025realistic}. Additionally, Prompt Leaking strategies (PLeak) aim to extract the system prompt itself~\cite{hui2024pleak}.

\subsection{Indirect Prompt Injection Defenses}
\textbf{IPI Sanitization.} Sanitization neutralizes threats within the untrusted input stream, including structured queries (StruQ) that enforce separation between instructions and data~\cite{chen2025struq}, and PISanitizer, which scrubs malicious patterns~\cite{geng2025pisanitizer}. Advanced methods like DRIP~\cite{liu2025drip} employ de-instruction training and residual fusion, while SecAlign~\cite{chen2025secalign} utilizes preference optimization. Other approaches include task-specific fine-tuning (Jatmo)~\cite{piet2024jatmo} and leveraging defensive tokens~\cite{chen2025defending}. Despite their utility~\cite{an2025ipiguard,du2026snapguard}, sanitization can be computationally expensive and may degrade semantic integrity. Therefore, our paper focuses on the IPI detection for evaluation and comparison.

\noindent\textbf{IPI Detection.} Injection detections classify inputs as benign or malicious before processing~\cite{du2026snapguard}. External guardrails, such as BERT-based classifiers (e.g., PIBert~\cite{pibert}, DeBERTa~\cite{deberta-pi}) or Perplexity-based filters~\cite{alon2023detecting}, operate as separate modules but often lack context-awareness. Internal methods monitor activation patterns; for instance, Attention Tracker~\cite{hung2025attention} and InjecGuard~\cite{li2024injecguard} analyze internal states. Furthermore, recent works have focused on not just detecting but localizing the specific injection segments within the prompt~\cite{jia2026promptlocate, jia2025promptlocate}. However, many existing methods still struggle with the over-refusal problem~\cite{li-etal-2025-piguard}.

\section{Methodology}
\label{sec:methodology}
\subsection{Threat Model}
\label{sec:threat_model}
\noindent\textbf{Adversary's Assumption.}
We assume the adversary has no access to the LLMs' frozen weights $\theta$, the secure system prompt $I_{sys}$, or the legitimate user query $I_{user}$. The adversary's capability is strictly limited to injecting malicious textual instructions ($I_{adv}$) into the external data segment $D_{ext}$ (e.g., by poisoning web pages or emails that are subsequently retrieved by the system)~\cite{willison2022prompt, liu2024promptinjection}. The adversary's objective is to manipulate the model's generation to bypass safety alignment, execute unauthorized commands, or perform data exfiltration~\cite{hui2024pleak}.

\noindent\textbf{Defender's Assumption.}
The defender, as the model owner, possesses white-box access to the LLM during the inference phase. The defender is aware of the input segmentation boundaries (i.e., identifying the external context $D_{ext}$) but has no prior knowledge regarding the benign or malicious nature of that content. To be practical for real-time applications, the defense mechanism must operate under a \textit{Low-Latency} constraint and satisfy a \textit{Utility Preservation} requirement, ensuring that the suppression of malicious instructions does not impair the model's ability to reason over benign knowledge~\cite{li-etal-2025-piguard}.

\begin{figure}
    \centering
    \subfloat[Llama3.1-8B-Base]{\includegraphics[width=0.495\linewidth]{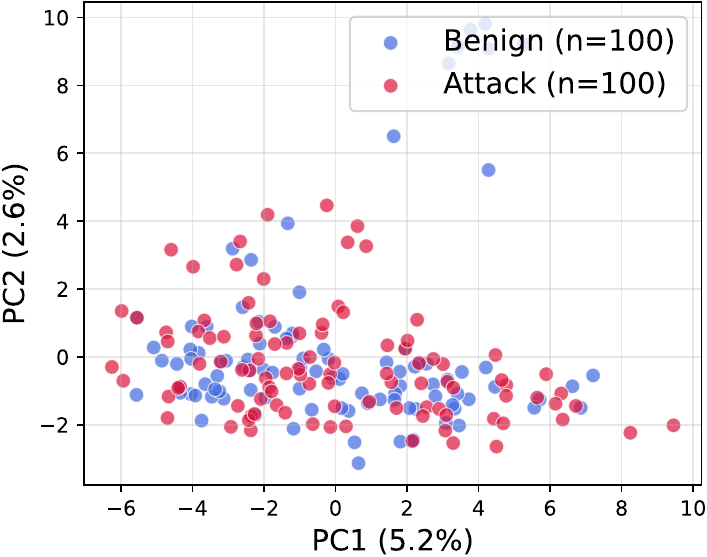}}
    \subfloat[Llama3.1-8B-Instruct]{\includegraphics[width=0.495\linewidth]{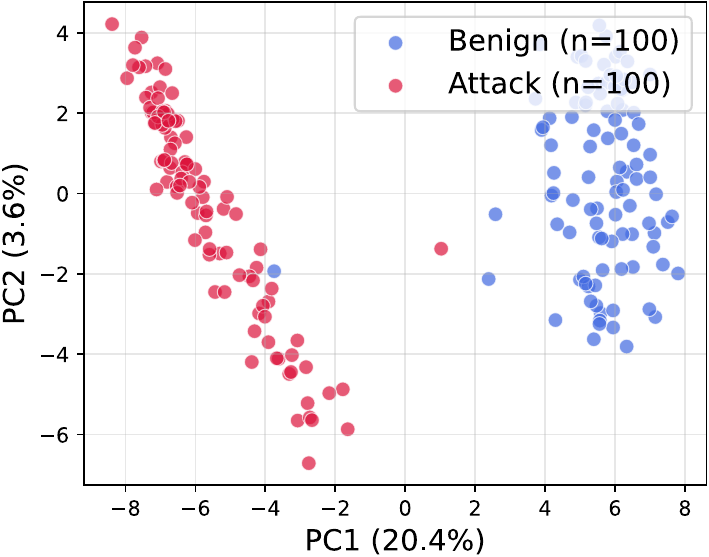}}
    \vspace{-5pt}
    \caption{PCA visualization of the last-token hidden state at layer 25 on two versions of Llama3.1-8B.}
    \label{fig:intuition}
    \vspace{-8pt}
\end{figure}

\subsection{Spectral Asymmetry and SNR}
Pre-training on vast corpora forces the knowledge representations $h_K$ to occupy a high-dimensional, high-variance manifold $\mathcal{M}_K$ to encode diverse facts.
Conversely, Supervised (Instruction) Fine-Tuning (SFT) trains the model to recognize a finite set of operator patterns (e.g., ``summarize'', ``ignore''). Consequently, instruction representations $h_I$ often collapse into a low-dimensional, compact subspace $\mathcal{M}_I$. As shown in Figure~\ref{fig:intuition}, the hidden state of the instruct model (w/ SFT) shows much higher separability than the base one (w/o SFT). Let $\Sigma_K,\Sigma_I\in \mathbb{R}^{d \times d}$ be the covariance matrices of knowledge and instructions. We posit that the spectral norms satisfy $\|\Sigma_K\|_2 \gg \|\Sigma_I\|_2$. Detecting an injection is equivalent to detecting a $h_I$ buried in noise $h_K$. Even if the subspaces overlap, we can maximize the Signal-to-Noise Ratio by finding a projection direction $w$ that minimizes the projected variance of the ``signal class'' (instructions) relative to the distance between class centroids via Fisher's criterion, which motivates our design of \aegis{}.

\subsection{Latent Signal Decomposition}
Consider the activation vector $h^{(l)} \in \mathbb{R}^d$ at layer $l$ for an input sequence $x$. Following the Linear Representation Hypothesis~\cite{park2024linear}, we model $h^{(l)}$ as a superposition of semantic components.
In a RAG context, a benign input consists of a knowledge component $K$. IPIs introduce an adversarial instruction component $I_{adv}$. The latent state can be formulated as:
\begin{equation}
    h^{(l)} = h_K^{(l)} + \alpha \cdot h_{I_{adv}}^{(l)} + \epsilon
\end{equation}
where $\alpha \in \{0, 1\}$ indicates the presence of an attack. The problem of detection is to find a projection function $f: \mathbb{R}^d \to \{0, 1\}$ that maximizes the detection probability $P(f(h)=1 | \alpha=1)$ while bounding the FPR $P(f(h)=1 | \alpha=0) \le \tau$.

\subsection{The \aegis{} Framework}
\subsubsection{Instruction Projector}
To identify IPIs, we must construct a projection operator that isolates the ``instruction intent'' from the background ``knowledge noise.'' 
Let $\mathcal{D}_I$ and $\mathcal{D}_K$ denote the distributions of instruction and knowledge representations, with centroids $\mu_I, \mu_K$ and covariances $\Sigma_I, \Sigma_K$. A naïve separation (e.g., difference of means) is insufficient because instruction representations exhibit significant variance along specific ``synonymy axes'' (e.g., lexical variations like ``Summarize'' vs. ``TL;DR'') while remaining invariant along the ``intent axis.''
We aim to find a projection direction $w$ that maximizes the separation between the instruction signal and the knowledge, while simultaneously minimizing the variance \textit{within} the instruction class. This ensures the detector locks onto the stable semantic core of an imperative command rather than superficial phrasing. We formulate this as:
\begin{equation}
    w^* = \mathop{\arg\max}_{w} \frac{(w^T (\mu_I - \mu_K))^2}{w^T \Sigma_I w},
    \label{eq:insm}
\end{equation}
where we use Ledoit-Wolf shrinkage to estimate $\Sigma_I, \Sigma_K$. Here, the numerator rewards the separation of semantic centers, while the denominator penalizes directions where instructions are unstable or noisy. 
Solving this optimization via Lagrange multipliers yields the closed-form \textbf{Instruction-Sensitive Projector}:
\begin{equation}
    w^* \propto \Sigma_I^{-1} (\mu_I - \mu_K)
\end{equation}
The term $\Sigma_I^{-1}$ acts as a \textbf{Semantic Focuser} and down-weights the dimensions where instructions vary (trivial lexical features) and up-weights the dimensions where instructions are compact and distinct from knowledge. Note that we discard $\Sigma_K$ in the optimization to avoid the futile compression of the knowledge manifold, focusing on collapsing the instruction manifold into a compact `spike'.

\subsubsection{Distribution-Aware Safety Calibration}
Having extracted the optimal detection direction $w^*$, we define a decision boundary. A fixed geometric margin is hazardous due to the heavy-tailed nature of the knowledge distribution.
To avoid over-refusal without making parametric assumptions about the knowledge manifold, we introduce a \textbf{Distribution-Aware Thresholding} mechanism.
We define the decision boundary $b$ based on the observed projection statistics of the benign knowledge set $\mathcal{D}_{benign}$. For a required safety level $\beta$ (e.g., guaranteeing $99\%$ utility retention), we set:
\begin{equation}
    b = - \mathcal{Q}_{1-\beta}(\{ w^{*T} h \mid h \in \mathcal{D}_{benign} \})
\end{equation}
where $\mathcal{Q}_{p}(\cdot)$ denotes the $p$-th quantile function. This mechanism ensures that the detector's sensitivity is mathematically anchored to the model's actual behavior on safe inputs so that the detector will trigger on at most $\beta$ fraction of benign queries.

\subsubsection{Multi-Layer Consensus Voting}
Single-layer detectors, while calibrated, may exhibit localized sensitivity to noise or ``polysemantic'' activation overlaps. To capture this global consistency, we model the layers as a committee and propose a \textbf{Unified Consensus Mechanism}. Let $S_l(x) = w_l^{*T} h^{(l)}$ be the raw projection score at layer $l \in \mathcal{L}_{vote}$. We define the ensemble score $\mathcal{S}_{ens}$ as the aggregate of transformed layer-wise signals:
\begin{equation}
    \mathcal{S}_{ens}(x) = \sum_{l \in \mathcal{L}_{vote}} \phi_l( S_l(x) )
\end{equation}
where $\phi_l(\cdot)$ is a layer-specific transformation kernel. The final detection decision is given by $\hat{y} = \mathbb{I}(\mathcal{S}_{ens}(x) \ge \Gamma)$, where $\Gamma$ is the global consensus threshold.
Our framework supports two distinct instantiation modes for $\phi_l$: (1) $\phi_{hard}(s) = \mathbb{I}(s > b_l)$
where $b_l$ is the FPR-calibrated threshold and $\mathcal{S}_{ens}$ represents the raw vote count, and $\Gamma$ can be the consensus quorum; (2) $\phi_{soft}(s) = (s - \mu_{benign}^{(l)})/\sigma_{b}^{(l)}$
where $\mu_{b}^{(l)}$ and $\sigma_{b}^{(l)}$ are the mean and standard deviation of benign projection scores at layer $l$. The \textbf{visual conceptual illustration} of \aegis{} framework is given in Figure~\ref{fig:framework} (for space limit).

\subsection{Theoretical Error Bound of \aegis{}}
\label{subsec:theoretical_analysis}
Consider the \textbf{Hard Voting} mode, we define the binary indicator variable $I_l = \phi_{hard}(S_l(x))$. For a benign input, we have the bound $P(I_l=1) \le \beta$.
Assuming approximate independence~\cite{abs-2502-02790} between non-adjacent layers, the ensemble FPR, denoted as $Q_{ens} = P(\mathcal{S}_{ens} \ge \Gamma)$, follows the tail of a Binomial distribution. By applying \textbf{Hoeffding's Inequality}~\cite{hoeffding1963probability}, we derive an exponential upper bound for the system-wide error rate:
\begin{equation}
    Q_{ens} \le \exp\left( -2 |\mathcal{L}_{vote}| \left( \frac{\Gamma}{|\mathcal{L}_{vote}|} - \beta \right)^2 \right)
    \label{eq:error_bound}
\end{equation}
This bound implies that as long as the consensus quorum ratio $\Gamma/|\mathcal{L}_{vote}|$ (e.g., 0.5 for majority rule) exceeds the single-layer error rate $\beta$ (e.g., 0.1), the probability of a false alarm decays exponentially with the number of layers. This guarantee allows us to set a looser $\beta$ to maintain high Recall at the layer level without compromising global safety. The analysis of soft voting is given in Appendix~\ref{app:stable_soft}.

\subsection{Analysis of Latent Ensemble Robustness}
\label{subsec:theoretical_robustness}
Here, we derive a lower bound on the perturbation magnitude required to deceive the ensemble in \textbf{Hard Voting} mode. We consider a \textit{Latent Adversary} capable of introducing an additive latent perturbation vector $\delta \in \mathbb{R}^d$ directly to the hidden state $h^{(l)}(x)$. The goal of the adversary is to force a safe input $x$ to cross the detection boundary defined by the projector $w_l^*$ and threshold $b_l$. Recall that our scoring function at layer $l$: $S_l(h) = \langle h, w_l^* \rangle$, and a detection is triggered if $S_l(h) \ge b_l$. Since we enforce $\|w_l^*\|_2 = 1$, the geometry of the decision boundary is a hyperplane with $w_l^*$.

For a benign input $h^{(l)}(x)$ correctly classified as safe (i.e., $\langle h^{(l)}(x), w_l^* \rangle < b_l$), the \textbf{Latent Robustness Radius}, denoted as $R_l(x)$, is defined as the minimum Euclidean distance from the current representation to the decision hyperplane:
\begin{equation}
    \begin{aligned}
    R_l(x) &= \min_{\delta} \|\delta\|_2 \quad \text{s.t.} \quad \langle h^{(l)}(x) + \delta, w_l^* \rangle \ge b_l \\
    &= \frac{b_l - \langle h^{(l)}(x), w_l^* \rangle}{\|w_l^*\|_2} = b_l - S_l(x)
    \end{aligned}
    \label{eq:exact_radius}
\end{equation}
This equation connects our optimization to robustness: the numerator $b_l - S_l(x)$ is the margin that Eq.~\ref{eq:insm} aims to maximize. By pushing the instruction manifold away from the benign knowledge distribution, \aegis{} enlarges this radius.

\subsubsection{Theoretical Robustness of the Ensemble}
We now extend to the multi-layer ensemble. An adversary must successfully flip the decision of at least $\Gamma$ layers to trigger a system-wide false positive, a simultaneous constraint on the perturbation.

\begin{proposition}[Ensemble Latent Robustness Radius]
    Let $\mathcal{R}(x) = \{R_1(x), R_2(x), \dots, R_{|\mathcal{L}|}(x)\}$ be the set of exact latent radii for all monitored layers calculated via Eq.~(\ref{eq:exact_radius}). Let $R_{(\Gamma)}$ denote the $\Gamma$-th smallest value in $\mathcal{R}(x)$ (i.e., the $\Gamma$-th order statistic).
    The latent robustness radius of the \aegis{} ensemble, $R_{ens}(x)$, is exactly determined by:
    \begin{equation}
        R_{ens}(x) = R_{(\Gamma)}
    \end{equation}
    Any latent perturbation $\delta$ with $\|\delta\|_2 < R_{(\Gamma)}$ is mathematically guaranteed to preserve the safe classification of the ensemble.
\end{proposition}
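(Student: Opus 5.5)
The plan is to reduce the ensemble statement to the single-layer geometry of Eq.~(\ref{eq:exact_radius}) and then count how many layers can be flipped under a given budget. First, the adversary model has to be fixed. I will read the Latent Adversary of Section~\ref{subsec:theoretical_robustness} as perturbing each monitored hidden state $h^{(l)}(x)$ by some $\delta_l$ subject to the common budget $\|\delta_l\|_2 \le \varepsilon$. The case where a single shared $\delta$ is added at every layer is a special case. I will also treat the monitored states as perturbed directly, so propagation of $\delta$ through later layers is ignored, as in the single-layer definition. The ensemble radius is then $R_{ens}(x) = \inf\{\varepsilon : \exists (\delta_l)_l,\ \|\delta_l\|_2\le\varepsilon,\ \sum_l \phi_{hard}(S_l(h^{(l)}+\delta_l)) \ge \Gamma\}$. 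The proof splits into a lower bound (the safety guarantee) and a matching construction (exactness).

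For the lower bound, fix any layer $l$ and any perturbation with $\|\delta_l\|_2 < R_l(x)$. By Cauchy--Schwarz and $\|w_l^*\|_2 = 1$,
\begin{equation*}
\langle h^{(l)}(x)+\delta_l, w_l^*\rangle \le S_l(x) + \|\delta_l\|_2 < S_l(x) + R_l(x) = b_l .
\end{equation*}
Hence layer $l$ stays safe. So under budget $\varepsilon$, the only layers that can vote ``attack'' are those in $\{l : R_l(x) \le \varepsilon\}$. If $\varepsilon < R_{(\Gamma)}$, then by the definition of the $\Gamma$-th order statistic this set has at most $\Gamma-1$ elements. Ties are harmless here: all $R_l$ equal to $R_{(\Gamma)}$ are excluded because $\varepsilon$ is strictly smaller. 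So the vote count stays below $\Gamma$, and the ensemble classification is preserved. This is the second claim of the proposition, and it gives $R_{ens}(x) \ge R_{(\Gamma)}$.

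For the upper bound, let $l_1,\dots,l_\Gamma$ be layers attaining the $\Gamma$ smallest radii. Set $\delta_{l_j} = R_{l_j}(x)\, w_{l_j}^*$ and $\delta_l = 0$ otherwise. Each $\delta_{l_j}$ has norm $R_{l_j}(x) \le R_{(\Gamma)}$, and it moves $S_{l_j}$ exactly onto $b_{l_j}$. Since detection is triggered by $S_l \ge b_l$ (non-strict), each of these $\Gamma$ layers flips, so $\Gamma$ votes are reached with budget $R_{(\Gamma)}$. This shows the infimum is attained and equals $R_{(\Gamma)}$.

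I expect the main obstacle to be the modelling step, not the algebra. If the adversary is forced to use one shared $\delta$ across layers, the lower bound argument goes through verbatim. The construction, however, fails unless the $w_l^*$ are aligned, and one would only get $R_{ens}(x) \ge R_{(\Gamma)}$. I would therefore state the per-layer-budget interpretation explicitly, and remark that the shared-$\delta$ case inherits the guarantee as an inequality.
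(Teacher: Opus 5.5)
Your proof is correct. The safety half is exactly the paper's proof, and your exactness half goes beyond it.

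The paper fixes a single $\delta$ with $\|\delta\|_2<R_{(\Gamma)}$. It applies Cauchy--Schwarz with $\|w_l^*\|_2=1$ to keep every layer with $R_l(x)>\|\delta\|_2$ below $b_l$, and then uses the order statistic to count at most $\Gamma-1$ flippable layers. Your tie handling matches this.

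The paper never proves the reverse inequality $R_{ens}(x)\le R_{(\Gamma)}$, so its word ``exactly'' is unsupported, and your diagnosis of why is right. The paper's adversary adds one shared $\delta$ to every monitored state, and under that model exactness fails in general. Take two layers, $\Gamma=2$, $w_1^*\perp w_2^*$ and $R_1(x)=R_2(x)=r$: any shared $\delta$ flipping both layers has $\|\delta\|_2\ge\sqrt{2}\,r>R_{(2)}$. The paper's own ``Geometric Interpretation'' paragraph, which argues a single $\delta$ faces a \emph{larger} effective radius, implicitly concedes this.

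Your per-layer-budget adversary is the model under which the proposition holds as stated. Your construction $\delta_{l_j}=R_{l_j}(x)\,w_{l_j}^*$ closes the gap. The cost is a stronger adversary than the one described in the robustness section; for the paper's shared-$\delta$ model only the lower bound survives, as you note.

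One small technicality concerns the boundary. The methodology defines $\phi_{hard}(s)=\mathbb{I}(s>b_l)$ with a strict inequality, whereas Eq.~(\ref{eq:exact_radius}) triggers on $\ge$. Under the strict version your perturbation lands on the hyperplane without flipping. You would then use $(R_{l_j}(x)+\eta)\,w_{l_j}^*$ with $\eta\downarrow 0$, which gives $R_{(\Gamma)}$ as an infimum that is not attained.

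Your bound ``norm $R_{l_j}(x)\le R_{(\Gamma)}$'' also tacitly assumes every $R_l(x)\ge 0$, i.e., each layer is individually safe at $x$. That is the setting in which Eq.~(\ref{eq:exact_radius}) is defined. If you allowed already-flagged layers, you would simply set $\delta_l=0$ there.
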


\textbf{Geometric Interpretation.}
The proposition above formalizes that the ensemble is as strong as its ``median'' link, rather than its weakest. Even if a specific layer $k$ has a small margin $R_k \approx 0$, the attacker must simultaneously satisfy the geometric constraints of $\Gamma$ in different semantic spaces. Finding a single vector $\delta$ that is collinear with $\Gamma$, different normals is geometrically constrained, resulting in a larger effective radius $R_{(\Gamma)}$.

\section{Experiments}
\subsection{Experimental Settings}
\noindent\textbf{Evaluation Datasets.}
Following previous works~\cite{jia2025promptlocate,hung2025attention}, we utilize the \textit{OpenPromptInjection} benchmark~\cite{liu2024promptinjection} for evaluation with 7 NLP tasks: duplicate sentence detection, grammar correction, hate speech detection, natural language inference, sentiment analysis, spam detection, and text summarization, with 100 instruction-data pairs per task.

\noindent\textbf{IPIs.} We evaluate \aegis{} against a diverse spectrum of IPIs:
(1) \textit{Heuristic-based attacks}: following the benchmark~\cite{liu2024promptinjection}, we include Naive Attack~\cite{harang2023securing}, Escape Characters~\cite{perez2022ignore}, Context Ignoring~\cite{willison2022prompt}, Fake Completion~\cite{willison2022prompt}, and Combined Attack~\cite{liu2024promptinjection}.
(2) \textit{Optimization-based attacks}: we employ Universal Injection~\cite{liu2024automatic}, NeuralExec~\cite{pasquini2024neural}, and PLeak~\cite{hui2024pleak}. All attacks utilize standard parameter configurations from their respective open-source implementations. Each attack was implemented on the datasets provided above (injected with the dataset samples above), consisting of $7\times 7\times 100\times 8$ attack samples.

\noindent\textbf{Evaluation Metrics.} Following previous works~\cite{li-etal-2025-piguard,liu2024promptinjection}, we use accuracy (ACC), false positive rate (FPR), false negative rate (FNR), true positive rate (TPR), F1 score, and computation overhead (time, GPU memory) as metrics. Note that FNR measures the fraction of the malicious samples that are falsely detected as clean.

\noindent\textbf{Defense Implementations.} We leverage the a balanced training corpus~\cite{zou2025pishield}, consisting of 10,000 benign samples from the processed Alpaca~\cite{taori2023stanford} and Natural Questions~\cite{kwiatkowski2019natural} with 10,000 negative sampels from the Naive Attack~\cite{liu2024promptinjection} and NeuralExec~\cite{pasquini2024neural}. Specifically, we only select 200 samples randomly from the corpus for $w^*$ estimation. For evaluated models, we include three mainstream models including Qwen3-4B (default in evaluation unless specified)~\cite{qwen3technicalreport}, Llama3-8B~\cite{llama3modelcard} and Llama3.1-8B~\cite{grattafiori2024llama}. More details are given in Appendix~\ref{app:aegis_detail}.

\noindent\textbf{Baselines.} We compare our method with both industrial and research detections, including (1) PIBert~\cite{pibert}, ProtectAI-deberta (DeBERTa)~\cite{deberta-pi}, and PromptGuard (PrGuard)~\cite{PromptGuard}; (2) PPL~\cite{alon2023detecting} (GPT2 for PPL$^1$ and Vicuna-7B for PPL$^2$), Known-Answer (KA)~\cite{Known-Answer}, InjecGuard~\cite{li-etal-2025-piguard,li2024injecguard}, AttentionTracker (AttTracker)~\cite{hung2025attention}, LLM-Naive~\cite{llm-naive} and PIShield~\cite{zou2025pishield}. The details and implementation of these baselines can be found in Appendix~\ref{app:baseline_detail}.

\begin{table*}[t]
\centering
\setlength{\tabcolsep}{0.8mm}
\renewcommand{\arraystretch}{1.05}
\aboverulesep=0.1ex
\belowrulesep=0.3ex
\resizebox{1.0\linewidth}{!}{
\begin{tabular}{ccccccccccccc}
\toprule[1.5pt]
\textbf{Attack} $\downarrow$ & \textbf{Defense}$\rightarrow$ & \textbf{AttTracker} & \textbf{DeBERTa} & \textbf{InjecGuard} & \textbf{KA} & \textbf{LLM Naive} & \textbf{PIBert} & \textbf{PPL$^{1}$} & \textbf{PPL$^{2}$} & \textbf{PrGuard} & \textbf{PIShield} & \textbf{AEGIS} \\ 
\midrule
 & \cellcolor[HTML]{EFEFEF}ACC $\uparrow$ & \cellcolor[HTML]{EFEFEF}{ 0.8071} & \cellcolor[HTML]{EFEFEF}{ 0.6219} & \cellcolor[HTML]{EFEFEF}{ \ul{0.9836}} & \cellcolor[HTML]{EFEFEF}{ 0.7195} & \cellcolor[HTML]{EFEFEF}{ 0.6710} & \cellcolor[HTML]{EFEFEF}{ 0.6050} & \cellcolor[HTML]{EFEFEF}{ 0.4508} & \cellcolor[HTML]{EFEFEF}{ 0.5182} & \cellcolor[HTML]{EFEFEF}{ 0.5371} & \cellcolor[HTML]{EFEFEF}{0.5758} & \cellcolor[HTML]{EFEFEF}{ {\textbf{0.9848}}} \\
 & \cellcolor[HTML]{ECF4FF}FPR $\downarrow$ & \cellcolor[HTML]{ECF4FF}0.3857 & \cellcolor[HTML]{ECF4FF}{\ul{0.0057}} & \cellcolor[HTML]{ECF4FF}\textbf{0.0243} & \cellcolor[HTML]{ECF4FF}0.4931 & \cellcolor[HTML]{ECF4FF}0.6069 & \cellcolor[HTML]{ECF4FF}0.7900 & \cellcolor[HTML]{ECF4FF}0.5492 & \cellcolor[HTML]{ECF4FF}0.7100 & \cellcolor[HTML]{ECF4FF}0.9257 & \cellcolor[HTML]{ECF4FF}{0.8171} & \cellcolor[HTML]{ECF4FF}0.0290 \\
\multirow{-3}{*}{\makecell{\textbf{Combined}\\\textbf{Attack}}} & \cellcolor[HTML]{FFFFC7}FNR $\downarrow$ & \cellcolor[HTML]{FFFFC7}\textbf{0.0000} & \cellcolor[HTML]{FFFFC7}0.7504 & \cellcolor[HTML]{FFFFC7}0.0086 & \cellcolor[HTML]{FFFFC7}0.0680 & \cellcolor[HTML]{FFFFC7}0.0510 & \cellcolor[HTML]{FFFFC7}\textbf{0.0000} & \cellcolor[HTML]{FFFFC7}0.5492 & \cellcolor[HTML]{FFFFC7}0.2537 & \cellcolor[HTML]{FFFFC7}\textbf{0.0000} & \cellcolor[HTML]{FFFFC7}{0.0312} & \cellcolor[HTML]{FFFFC7}{\ul{0.0014}} \\ \midrule
 & \cellcolor[HTML]{EFEFEF}ACC $\uparrow$ & \cellcolor[HTML]{EFEFEF}{\ul{0.8071}} & \cellcolor[HTML]{EFEFEF}0.4996 & \cellcolor[HTML]{EFEFEF}0.6005 & \cellcolor[HTML]{EFEFEF}0.6701 & \cellcolor[HTML]{EFEFEF}0.5560 & \cellcolor[HTML]{EFEFEF}0.6050 & \cellcolor[HTML]{EFEFEF}0.3996 & \cellcolor[HTML]{EFEFEF}0.4597 & \cellcolor[HTML]{EFEFEF}0.5371 & \cellcolor[HTML]{EFEFEF}{0.5790} & \cellcolor[HTML]{EFEFEF}\textbf{0.9852} \\
 & \cellcolor[HTML]{ECF4FF}FPR $\downarrow$ & \cellcolor[HTML]{ECF4FF}0.3857 & \cellcolor[HTML]{ECF4FF}{\ul{0.0057}} & \cellcolor[HTML]{ECF4FF}\textbf{0.0243} & \cellcolor[HTML]{ECF4FF}0.4967 & \cellcolor[HTML]{ECF4FF}0.6029 & \cellcolor[HTML]{ECF4FF}0.7900 & \cellcolor[HTML]{ECF4FF}0.6004 & \cellcolor[HTML]{ECF4FF}0.7100 & \cellcolor[HTML]{ECF4FF}0.9257 & \cellcolor[HTML]{ECF4FF}{0.8171} & \cellcolor[HTML]{ECF4FF}0.0290 \\
\multirow{-3}{*}{\makecell{\textbf{Escape}\\\textbf{Character}}} & \cellcolor[HTML]{FFFFC7}FNR $\downarrow$ & \cellcolor[HTML]{FFFFC7}\textbf{0.0000} & \cellcolor[HTML]{FFFFC7}0.9951 & \cellcolor[HTML]{FFFFC7}0.7747 & \cellcolor[HTML]{FFFFC7}0.1631 & \cellcolor[HTML]{FFFFC7}0.2851 & \cellcolor[HTML]{FFFFC7}\textbf{0.0000} & \cellcolor[HTML]{FFFFC7}0.6004 & \cellcolor[HTML]{FFFFC7}0.3706 & \cellcolor[HTML]{FFFFC7}\textbf{0.0000} & \cellcolor[HTML]{FFFFC7}{0.0249} & \cellcolor[HTML]{FFFFC7}{\ul{0.0006}} \\ 
\midrule
& \cellcolor[HTML]{EFEFEF}ACC $\uparrow$ & \cellcolor[HTML]{EFEFEF}{\ul{0.8071}} & \cellcolor[HTML]{EFEFEF}0.4999 & \cellcolor[HTML]{EFEFEF}0.6124 & \cellcolor[HTML]{EFEFEF}0.7082 & \cellcolor[HTML]{EFEFEF}0.6254 & \cellcolor[HTML]{EFEFEF}0.6050 & \cellcolor[HTML]{EFEFEF}0.4347 & \cellcolor[HTML]{EFEFEF}0.5051 & \cellcolor[HTML]{EFEFEF}0.5371 & \cellcolor[HTML]{EFEFEF}{0.5769} & \cellcolor[HTML]{EFEFEF}\textbf{0.9848} \\
 & \cellcolor[HTML]{ECF4FF}FPR $\downarrow$ & \cellcolor[HTML]{ECF4FF}0.3857 & \cellcolor[HTML]{ECF4FF}{\ul{0.0057}} & \cellcolor[HTML]{ECF4FF}\textbf{0.0243} & \cellcolor[HTML]{ECF4FF}0.4963 & \cellcolor[HTML]{ECF4FF}0.6024 & \cellcolor[HTML]{ECF4FF}0.7900 & \cellcolor[HTML]{ECF4FF}0.5653 & \cellcolor[HTML]{ECF4FF}0.7100 & \cellcolor[HTML]{ECF4FF}0.9257 & \cellcolor[HTML]{ECF4FF}{0.8171} & \cellcolor[HTML]{ECF4FF}0.0290 \\
\multirow{-3}{*}{\makecell{\textbf{Fake}\\\textbf{Completion}}} & \cellcolor[HTML]{FFFFC7}FNR $\downarrow$ & \cellcolor[HTML]{FFFFC7}\textbf{0.0000} & \cellcolor[HTML]{FFFFC7}0.9945 & \cellcolor[HTML]{FFFFC7}0.7508 & \cellcolor[HTML]{FFFFC7}0.0873 & \cellcolor[HTML]{FFFFC7}0.1467 & \cellcolor[HTML]{FFFFC7}\textbf{0.0000} & \cellcolor[HTML]{FFFFC7}0.5653 & \cellcolor[HTML]{FFFFC7}0.2798 & \cellcolor[HTML]{FFFFC7}\textbf{0.0000} & \cellcolor[HTML]{FFFFC7}{0.0290} & \cellcolor[HTML]{FFFFC7}{\ul{0.0013}} \\ \midrule
 & \cellcolor[HTML]{EFEFEF}ACC $\uparrow$ & \cellcolor[HTML]{EFEFEF}0.8071 & \cellcolor[HTML]{EFEFEF}0.6873 & \cellcolor[HTML]{EFEFEF}\ul{0.9823} & \cellcolor[HTML]{EFEFEF}0.7056 & \cellcolor[HTML]{EFEFEF}0.4616 & \cellcolor[HTML]{EFEFEF}0.6050 & \cellcolor[HTML]{EFEFEF}0.4535 & \cellcolor[HTML]{EFEFEF}0.4990 & \cellcolor[HTML]{EFEFEF}0.5371 & \cellcolor[HTML]{EFEFEF}{0.5755} & \cellcolor[HTML]{EFEFEF}{\textbf{0.9850}} \\
 & \cellcolor[HTML]{ECF4FF}FPR $\downarrow$ & \cellcolor[HTML]{ECF4FF}0.3857 & \cellcolor[HTML]{ECF4FF}{\ul{0.0057}} & \cellcolor[HTML]{ECF4FF}\textbf{0.0243} & \cellcolor[HTML]{ECF4FF}0.4971 & \cellcolor[HTML]{ECF4FF}0.6155 & \cellcolor[HTML]{ECF4FF}0.7900 & \cellcolor[HTML]{ECF4FF}0.5465 & \cellcolor[HTML]{ECF4FF}0.7100 & \cellcolor[HTML]{ECF4FF}0.9257 & \cellcolor[HTML]{ECF4FF}{0.8171} & \cellcolor[HTML]{ECF4FF}0.0290 \\
\multirow{-3}{*}{\makecell{\textbf{Context}\\\textbf{Ignoring}}} & \cellcolor[HTML]{FFFFC7}FNR $\downarrow$ & \cellcolor[HTML]{FFFFC7}\textbf{0.0000} & \cellcolor[HTML]{FFFFC7}0.6196 & \cellcolor[HTML]{FFFFC7}0.0110 & \cellcolor[HTML]{FFFFC7}0.0916 & \cellcolor[HTML]{FFFFC7}0.4612 & \cellcolor[HTML]{FFFFC7}\textbf{0.0000} & \cellcolor[HTML]{FFFFC7}0.5465 & \cellcolor[HTML]{FFFFC7}0.2920 & \cellcolor[HTML]{FFFFC7}\textbf{0.0000} & \cellcolor[HTML]{FFFFC7}{0.0318} & \cellcolor[HTML]{FFFFC7}{\ul{0.0021}} \\ \midrule
 & \cellcolor[HTML]{EFEFEF}ACC $\uparrow$ & \cellcolor[HTML]{EFEFEF}0.8071 & \cellcolor[HTML]{EFEFEF}0.4996 & \cellcolor[HTML]{EFEFEF}0.6005 & \cellcolor[HTML]{EFEFEF}{\ul{0.7018}} & \cellcolor[HTML]{EFEFEF}0.4585 & \cellcolor[HTML]{EFEFEF}0.6050 & \cellcolor[HTML]{EFEFEF}0.4231 & \cellcolor[HTML]{EFEFEF}0.4752 & \cellcolor[HTML]{EFEFEF}0.5371 & \cellcolor[HTML]{EFEFEF}{0.5783} & \cellcolor[HTML]{EFEFEF}\textbf{0.9850} \\
 & \cellcolor[HTML]{ECF4FF}FPR $\downarrow$ & \cellcolor[HTML]{ECF4FF}0.3857 & \cellcolor[HTML]{ECF4FF}{\ul{0.0057}} & \cellcolor[HTML]{ECF4FF}\textbf{0.0243} & \cellcolor[HTML]{ECF4FF}0.4994 & \cellcolor[HTML]{ECF4FF}0.6051 & \cellcolor[HTML]{ECF4FF}0.7900 & \cellcolor[HTML]{ECF4FF}0.5769 & \cellcolor[HTML]{ECF4FF}0.7100 & \cellcolor[HTML]{ECF4FF}0.9257 & \cellcolor[HTML]{ECF4FF}{0.8171} & \cellcolor[HTML]{ECF4FF}0.0290 \\
\multirow{-3}{*}{\makecell{\textbf{Naive}\\\textbf{Attack}}} & \cellcolor[HTML]{FFFFC7}FNR $\downarrow$ & \cellcolor[HTML]{FFFFC7}\textbf{0.0000} & \cellcolor[HTML]{FFFFC7}0.9951 & \cellcolor[HTML]{FFFFC7}0.7747 & \cellcolor[HTML]{FFFFC7}0.0969 & \cellcolor[HTML]{FFFFC7}0.4780 & \cellcolor[HTML]{FFFFC7}\textbf{0.0000} & \cellcolor[HTML]{FFFFC7}0.5769 & \cellcolor[HTML]{FFFFC7}0.3396 & \cellcolor[HTML]{FFFFC7}\textbf{0.0000} & \cellcolor[HTML]{FFFFC7}{0.0263} & \cellcolor[HTML]{FFFFC7}{\ul{0.0010}} \\ \midrule
 & \cellcolor[HTML]{EFEFEF}ACC $\uparrow$ & \cellcolor[HTML]{EFEFEF}0.8071 & \cellcolor[HTML]{EFEFEF}0.5620 & \cellcolor[HTML]{EFEFEF}\textbf{0.9879} & \cellcolor[HTML]{EFEFEF}0.7404 & \cellcolor[HTML]{EFEFEF}0.6964 & \cellcolor[HTML]{EFEFEF}0.6050 & \cellcolor[HTML]{EFEFEF}0.7867 & \cellcolor[HTML]{EFEFEF}0.6413 & \cellcolor[HTML]{EFEFEF}0.5371 & \cellcolor[HTML]{EFEFEF}{0.5786} & \cellcolor[HTML]{EFEFEF}{\ul{0.9847}} \\
 & \cellcolor[HTML]{ECF4FF}FPR $\downarrow$ & \cellcolor[HTML]{ECF4FF}0.3857 & \cellcolor[HTML]{ECF4FF}{\ul{0.0057}} & \cellcolor[HTML]{ECF4FF}\textbf{0.0243} & \cellcolor[HTML]{ECF4FF}0.4957 & \cellcolor[HTML]{ECF4FF}0.6020 & \cellcolor[HTML]{ECF4FF}0.7900 & \cellcolor[HTML]{ECF4FF}0.2133 & \cellcolor[HTML]{ECF4FF}0.7100 & \cellcolor[HTML]{ECF4FF}0.9257 & \cellcolor[HTML]{ECF4FF}{0.8171} & \cellcolor[HTML]{ECF4FF}\textbf{0.029} \\
\multirow{-3}{*}{\textbf{NeuralExec}} & \cellcolor[HTML]{FFFFC7}FNR $\downarrow$ & \cellcolor[HTML]{FFFFC7}\textbf{0.0000} & \cellcolor[HTML]{FFFFC7}0.8702 & \cellcolor[HTML]{FFFFC7}\textbf{0.0000} & \cellcolor[HTML]{FFFFC7}0.0235 & \cellcolor[HTML]{FFFFC7}0.0051 & \cellcolor[HTML]{FFFFC7}\textbf{0.0000} & \cellcolor[HTML]{FFFFC7}0.2133 & \cellcolor[HTML]{FFFFC7}0.0073 & \cellcolor[HTML]{FFFFC7}\textbf{0.0000} & \cellcolor[HTML]{FFFFC7}{0.0257} & \cellcolor[HTML]{FFFFC7}{\ul{0.0016}} \\ \midrule
 & \cellcolor[HTML]{EFEFEF}ACC $\uparrow$ & \cellcolor[HTML]{EFEFEF}0.8071 & \cellcolor[HTML]{EFEFEF}0.8170 & \cellcolor[HTML]{EFEFEF}{\ul{0.8429}} & \cellcolor[HTML]{EFEFEF}0.6180 & \cellcolor[HTML]{EFEFEF}0.6391 & \cellcolor[HTML]{EFEFEF}0.6050 & \cellcolor[HTML]{EFEFEF}0.6271 & \cellcolor[HTML]{EFEFEF}0.6106 & \cellcolor[HTML]{EFEFEF}0.5371 & \cellcolor[HTML]{EFEFEF}{0.5893} & \cellcolor[HTML]{EFEFEF}\textbf{0.9855} \\
 & \cellcolor[HTML]{ECF4FF}FPR $\downarrow$ & \cellcolor[HTML]{ECF4FF}0.3857 & \cellcolor[HTML]{ECF4FF}{\ul{0.0057}} & \cellcolor[HTML]{ECF4FF}\textbf{0.0243} & \cellcolor[HTML]{ECF4FF}0.4945 & \cellcolor[HTML]{ECF4FF}0.6035 & \cellcolor[HTML]{ECF4FF}0.7900 & \cellcolor[HTML]{ECF4FF}0.3729 & \cellcolor[HTML]{ECF4FF}0.7100 & \cellcolor[HTML]{ECF4FF}0.9257 & \cellcolor[HTML]{ECF4FF}{0.8171} & \cellcolor[HTML]{ECF4FF}0.0290 \\
\multirow{-3}{*}{\textbf{PLeak}} & \cellcolor[HTML]{FFFFC7}FNR $\downarrow$ & \cellcolor[HTML]{FFFFC7}\textbf{0.0000} & \cellcolor[HTML]{FFFFC7}0.3602 & \cellcolor[HTML]{FFFFC7}0.2900 & \cellcolor[HTML]{FFFFC7}0.2696 & \cellcolor[HTML]{FFFFC7}0.1184 & \cellcolor[HTML]{FFFFC7}\textbf{0.0000} & \cellcolor[HTML]{FFFFC7}0.3729 & \cellcolor[HTML]{FFFFC7}0.0688 & \cellcolor[HTML]{FFFFC7}\textbf{0.0000} & \cellcolor[HTML]{FFFFC7}{0.0043} & \cellcolor[HTML]{FFFFC7}\textbf{0.0000} \\ \midrule
 & \cellcolor[HTML]{EFEFEF}ACC $\uparrow$ & \cellcolor[HTML]{EFEFEF}0.8071 & \cellcolor[HTML]{EFEFEF}0.5304 & \cellcolor[HTML]{EFEFEF}\textbf{0.9879} & \cellcolor[HTML]{EFEFEF}0.7371 & \cellcolor[HTML]{EFEFEF}0.6118 & \cellcolor[HTML]{EFEFEF}0.6050 & \cellcolor[HTML]{EFEFEF}0.5963 & \cellcolor[HTML]{EFEFEF}0.6029 & \cellcolor[HTML]{EFEFEF}0.5371 & \cellcolor[HTML]{EFEFEF}{0.5774} & \cellcolor[HTML]{EFEFEF}{\ul{0.9850}} \\
 & \cellcolor[HTML]{ECF4FF}FPR $\downarrow$ & \cellcolor[HTML]{ECF4FF}0.3857 & \cellcolor[HTML]{ECF4FF}{\ul{0.0057}} & \cellcolor[HTML]{ECF4FF}\textbf{0.0243} & \cellcolor[HTML]{ECF4FF}0.4922 & \cellcolor[HTML]{ECF4FF}0.6035 & \cellcolor[HTML]{ECF4FF}0.7900 & \cellcolor[HTML]{ECF4FF}0.4037 & \cellcolor[HTML]{ECF4FF}0.7100 & \cellcolor[HTML]{ECF4FF}0.9257 & \cellcolor[HTML]{ECF4FF}{0.8171} & \cellcolor[HTML]{ECF4FF}0.0290 \\
\multirow{-3}{*}{\makecell{\textbf{Universal}\\\textbf{Injection}}} & \cellcolor[HTML]{FFFFC7}FNR $\downarrow$ & \cellcolor[HTML]{FFFFC7}\textbf{0.0000} & \cellcolor[HTML]{FFFFC7}0.9335 & \cellcolor[HTML]{FFFFC7}\textbf{0.0000} & \cellcolor[HTML]{FFFFC7}0.0335 & \cellcolor[HTML]{FFFFC7}0.1729 & \cellcolor[HTML]{FFFFC7}\textbf{0.0000} & \cellcolor[HTML]{FFFFC7}0.4037 & \cellcolor[HTML]{FFFFC7}0.0843 & \cellcolor[HTML]{FFFFC7}\textbf{0.0000} & \cellcolor[HTML]{FFFFC7}{0.0280} & \cellcolor[HTML]{FFFFC7}{\ul{0.0010}} \\ 
\bottomrule[1pt]
\end{tabular}}
\caption{Overall detection performance (ACC, FPR and FNR) comparison between AEGIS and baselines.}
\vspace{-8pt}
\label{tab:main_opi}
\end{table*}

\subsection{Main Results}
\textbf{Performance Comparison.} Table~\ref{tab:main_opi} presents the comprehensive evaluation results across 8 IPIs. We observe that \aegis{} consistently outperforms baseline methods, achieving SOTA ACC while maintaining a negligible FPR. Note that FPR of all detection against different attacks is almost the same since the benign samples to calculate FPR are the same for different attacks. The primary limitation of existing defenses is the imbalance between detection sensitivity and benign utility preservation. As shown in Table~\ref{tab:main_opi}, baseline methods typically gravitate towards one extreme: (1) \textit{Over-Refusal:} Methods like AttTracker and PrGuard achieve low FNR but suffer from high FPRs (38.57\% and 92.57\%, respectively), indicating a large portion of benign queries are incorrectly rejected. (2) \textit{Under-Detection:} Conversely, shallow classifiers such as DeBERTa and PIBert maintain low FPRs ($\approx 0.5\%$) but fail to generalize to heuristic attacks, exhibiting FNRs as high as 99.51\% on Naive Attacks and 99.45\% on Fake Completion. In contrast, \aegis{} achieves an average ACC exceeding 98\% across most tasks. Crucially, it maintains a consistent FPR of \textbf{0.29\%}. This confirms that \aegis{} can identify malicious intent without compromising the model's ability to process safe queries. The score distribution of \aegis{} on different datasets and attacks be found in Figure~\ref{fig:hist_distribution}.

Also, \aegis{} demonstrates superior generalization capabilities against both attack paradigms. It effectively detects \textit{heuristic-based attacks}, where static embeddings often fail due to lexical variations. Furthermore, it remains robust against sophisticated \textit{optimization-based attacks}, achieving $>98\%$ accuracy where perplexity-based methods struggle (e.g., PPL$^1$ achieves only 59.63\% accuracy on Universal Injection). This validates the effectiveness of the multi-layer consensus mechanism in capturing deep semantic anomalies that persist across the network depth.

\begin{figure}[t]
    \centering
    \includegraphics[width=1\linewidth]{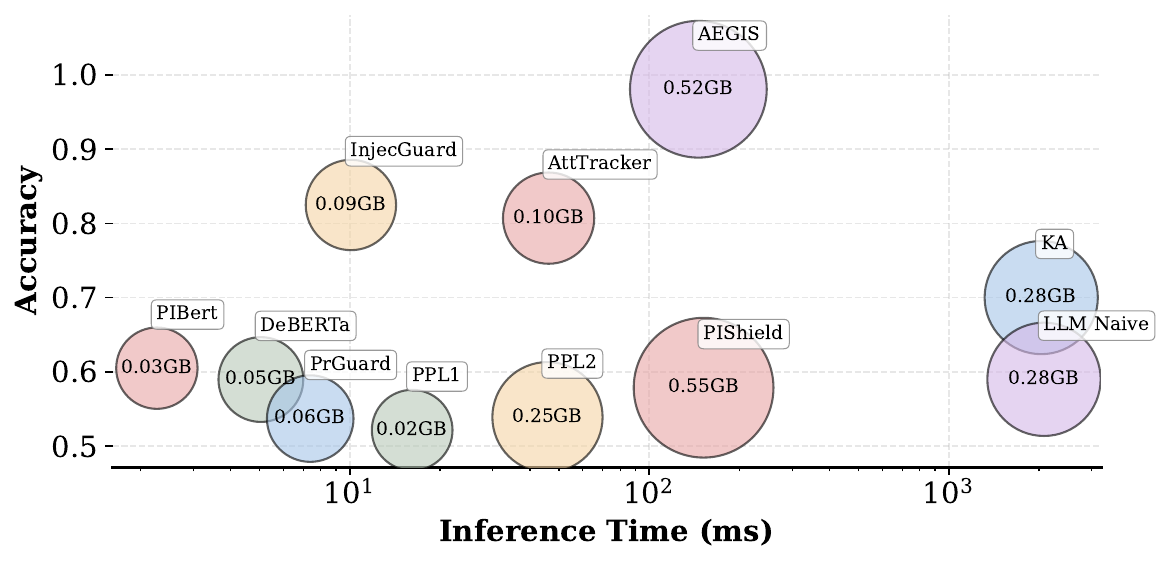}
    \vspace{-15pt}
    \caption{Efficiency-accuracy trade-off comparison.}
    \label{fig:efficiency}
    \vspace{-10pt}
\end{figure}

\textbf{Overhead Comparison.} Results from Qwen3-4B in Figure~\ref{fig:efficiency} visualizes the trade-off between detection accuracy and inference latency, where bubble sizes correspond to peak GPU memory consumption for each samples (average token length$\geq 100$). We observe that much external guardrails such as PIBert and DeBERTa achieve minimal latency ($<10$ ms) and memory usage ($<0.1$ GB). However, their low accuracy ($\approx 60\%$) renders them insufficient for defending against sophisticated injection attacks. However, LLM-based methods like LLM Naive achieve moderate accuracy but incur prohibitive latency overheads ($>1000$ ms). \aegis{} occupies the optimal position on the Pareto frontier among baselines. While it introduces a marginal latency increase ($\approx 120$ ms) compared to static classifiers, it achieves near-perfect detection accuracy ($>98\%$). Furthermore, \aegis{} remains computationally efficient compared to the substantial resource demands of distinct LLM-based defenses, demonstrating a favorable balance between robust safety and deployment feasibility.

\subsection{Ablation Studies}
\label{sec:ablation}

\begin{figure}[h]
    \centering
    \includegraphics[width=1\linewidth]{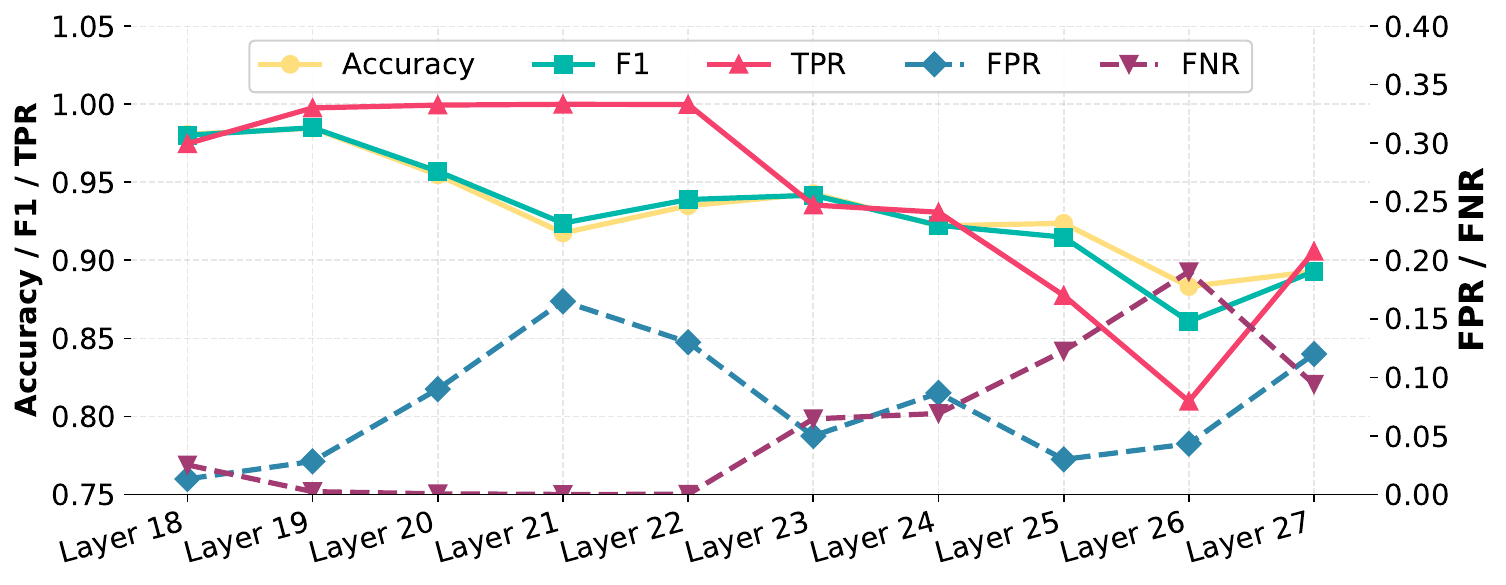}
    \vspace{-15pt}
    \caption{Detection performance of different layers.}
    \label{fig:layer_perforamce}
\end{figure}

\textbf{Performance Across Layers.} A core premise of our work is that instruction and knowledge representations exhibit distinct spectral signatures at specific depths of the network. Results from Qwen3-4B in Figure~\ref{fig:layer_perforamce} illustrate detection performance across layers 18 to 27. We observe that detection accuracy peaks and FPR is minimized in the middle layers, where the semantic separation between instructions and knowledge is most pronounced. While in deeper layers, the FNR increases significantly. This suggests that as the model approaches the output stage, the instruction manifold collapses into the token generation subspace, making the malicious intent harder to distinguish from benign generation. This finding justifies our strategy of targeting intermediate layers for projector extraction.

\begin{figure}[h]
    \centering
    \includegraphics[width=1\linewidth]{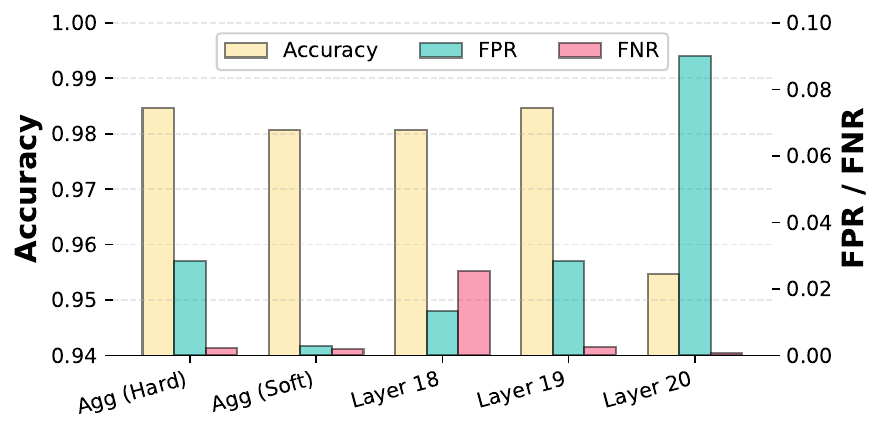}
    \vspace{-15pt}
    \caption{Ablation study of detection settings.}
    \label{fig:ablation_mode}
\end{figure}

\textbf{Impact of Multi-Layer Consensus.}
We evaluate the efficacy of the Unified Multi-Layer Consensus mechanism by comparing it against single-layer detectors. As shown in Figure~\ref{fig:ablation_mode}, result from Qwen3-4B, while individual layers (e.g., Layer 19) can achieve high accuracy, they often exhibit instability in either FPR or FNR. The ensemble approaches, both \textit{Agg (Hard)} and \textit{Agg (Soft)}, effectively mitigate this variance.  Specifically, the ensemble mechanisms maintain high accuracy ($>97\%$) while stabilizing the FPR below individual layer fluctuations. This confirms that aggregating topologically distinct signals filters out layer-specific noise, creating a robust detection boundary.

\begin{figure}[h]
    \centering
    \includegraphics[width=1\linewidth]{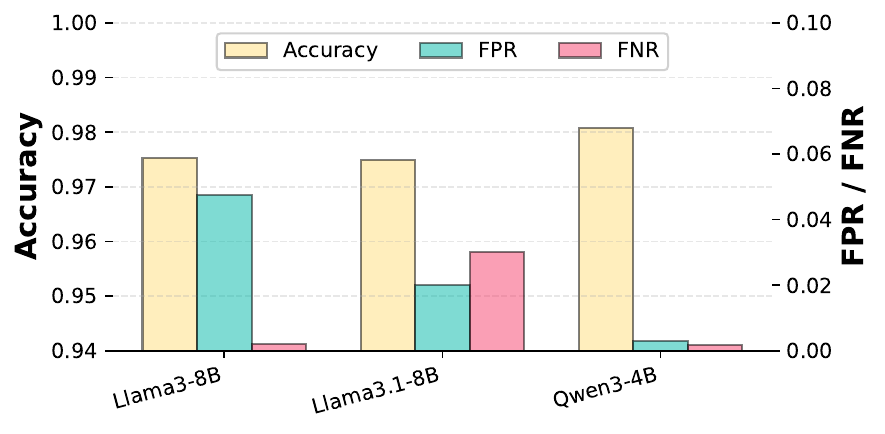}
    \vspace{-15pt}
    \caption{Detection performance of different models.}
    \label{fig:different_models}
\end{figure}

\textbf{Cross-Model Generalization.} To verify that \aegis{} captures fundamental properties of LLMs rather than architecture-specific artifacts, we evaluate its performance on other LLM families. Figure~\ref{fig:different_models} demonstrates the detection metrics on Llama-3-8B, Llama-3.1-8B, and Qwen3-4B. \aegis{} achieves consistent high performance (Accuracy $>97\%$, FPR $<5\%$) across all tested models. This universality suggests that the spectral asymmetry between instruction and knowledge is a generic structural invariant introduced by the alignment process, rendering \aegis{} applicable to a wide range of LLMs (More details in Table~\ref{tab:other_models}).

\begin{figure*}[t]
    \centering
    \includegraphics[width=0.9\linewidth]{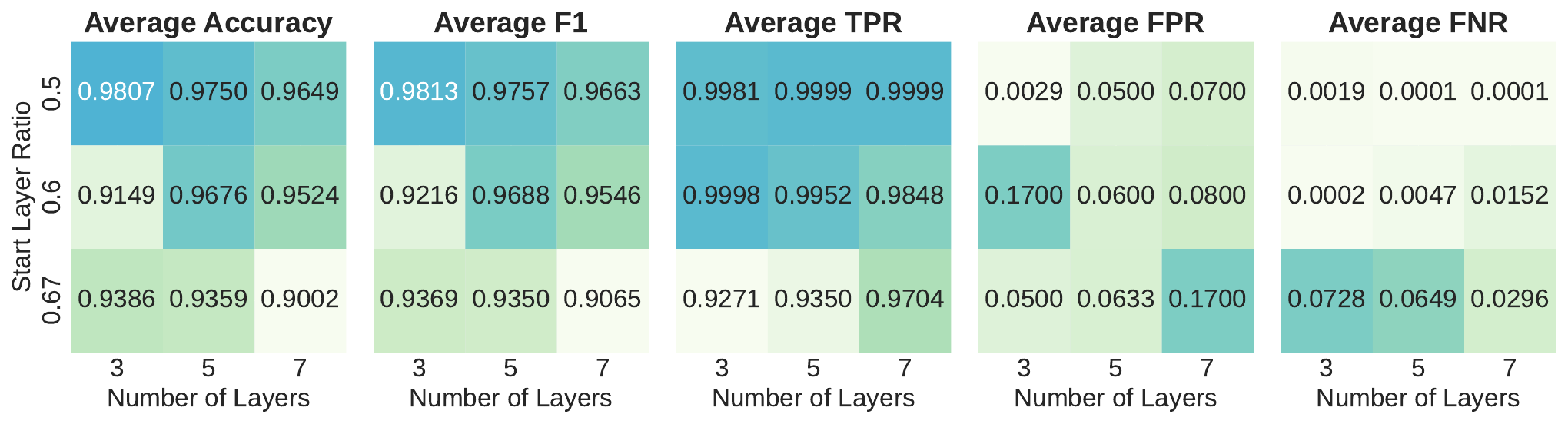}
    \vspace{-10pt}
    \caption{Detection performance (Qwen3-4B) with different numbers and positions of aggregated layers.}
    \label{fig:layer_heatmap}
\end{figure*}

\textbf{Sensitivity to Aggregation Hyperparameters.} Figure~\ref{fig:layer_heatmap} presents a heatmap analysis of the ensemble performance under varying start layer ratios (position in the network) and the number of voting layers ($|\mathcal{L}_{vote}|$). Starting aggregation at the midpoint (0.5) of the network yields the highest Average Accuracy and TPR. Shifting the window to later layers (Start Ratio 0.67) leads to a degradation in performance, characterized by increased FNR (lighter regions in the bottom-right heatmap). Furthermore, increasing the number of layers from 3 to 7 improves stability up to a saturation point.

\begin{figure}[h]
    \centering
    \includegraphics[width=1\linewidth]{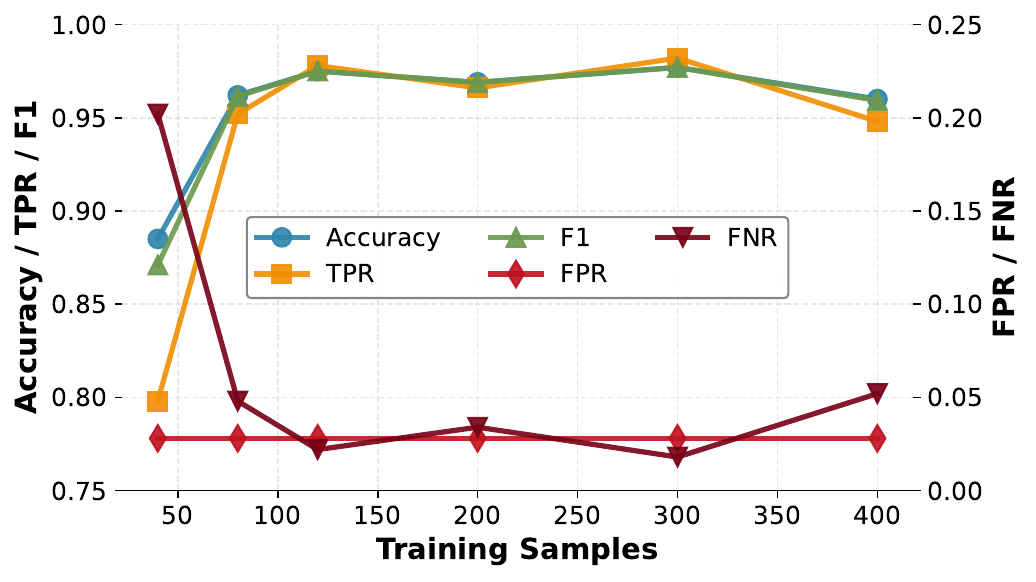}
    \vspace{-10pt}
    \caption{Detection performance (Llama3.1-8B) with different scales of training samples.}
    \label{fig:training_ratio}
\end{figure}

\textbf{Data Efficiency.} A practical defense must not require extensive datasets for calibration. We investigate the data efficiency of \aegis{} by varying the number of overall training samples (benign/malicious pairs) used to estimate the instruction-sensitive projector. Figure~\ref{fig:training_ratio} and Figure~\ref{fig:roc_curve} illustrate the performance and ROC curves, respectively, as the training sample size increases from 40 (20+20) to 400 (200+200). Remarkably, \aegis{} converges to optimal performance with as few as 100 samples. The accuracy and F1 scores plateau quickly, and the ROC curves in Figure~\ref{fig:roc_curve} show that the area under the curve (AUC) remains high even in low-data regimes. This high data efficiency proves that the instruction manifold is low-dimensional and stable, requiring minimal samples to characterize its principal components.

\begin{figure}[t]
    \centering
    \includegraphics[width=1\linewidth]{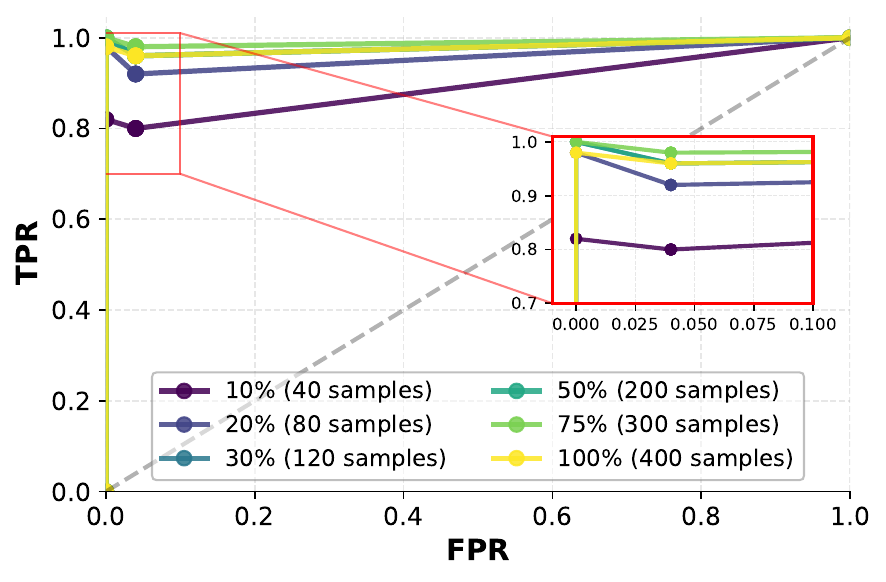}
    \vspace{-15pt}
    \caption{ROC curves (Llama3.1-8B) with different scales of training samples.}
    \label{fig:roc_curve}
    \vspace{-10pt}
\end{figure}

\section{Conclusion}
\label{sec:conclusion}
In this work, we analyze the conflation of instructions and data renders LLMs susceptible to PI. Our theoretical analysis identified the spectral asymmetry between the compact instruction manifold and the knowledge manifold as a critical leverage point. Based on this, we proposed \aegis{}, enabling the extraction of instruction-sensitive projectors. Furthermore, we formalized the robustness of our approach through a Unified Multi-Layer Consensus mechanism with layer-wise calibration and ensemble voting. Empirical evaluations across diverse benchmarks demonstrate that \aegis{} effectively detects indirect injections, achieving SOTA detection rates with negligible impact on benign utility.

\section*{Limitations}
While \aegis{} offers robust defense guarantees, we acknowledge several limitations that define the scope of its applicability: (1) \aegis{} requires access to the internal hidden states (activations) of the protected model, limiting its direct applicability to closed-source commercial models; (2) the error bound of \aegis{} is based on the assumption of layer independence, which might not always hold~\cite{abs-2502-02790}, leading to Equation~\ref{eq:error_bound} be an optimistic upper bound; (3) \aegis{} introduces a non-zero computational overhead during the detection, which may impact latency in ultra-high-throughput scenarios. These limitations can also be found in other IPI studies~\cite{liu2025datasentinel,jia2026promptlocate,chen2025struq,geng2025pisanitizer,piet2024jatmo,chen-etal-2025-defense}, regardless of their strategies.

\section*{Ethical Considerations}
We strictly adhere to the ACM Code of Ethics and the ACL Code of Conduct throughout the research. The primary objective of this study is to enhance the security and robustness of LLMs by proposing \aegis{} to mitigate the risks of IPI in real-world applications. Our experiments rely exclusively on established public benchmarks. All data used in this study is anonymized and contains no personally identifiable information. We do not introduce or release any new datasets containing harmful, toxic, or offensive content beyond what is already available in the public domain for research purposes. Our commitment to transparency is reflected in making \aegis{} fully open-source upon publication, fostering collaboration, and allowing others to verify, replicate, and build upon our work for the betterment of the field. 

While our work focuses on enhancing defensive mechanisms against prompt injection attacks, we acknowledge the potential for dual use in security research. We encourage the ethical and responsible use of \aegis{} to improve LLM security and not for malicious purposes. 

\section*{LLM Usage Considerations}
We utilized Large Language Models (specifically Gemini3-Pro and GPT-5) to assist with checking grammatical errors, refining sentence structure, and polishing the clarity of the text. All scientific claims, experimental designs, theoretical derivations, and the final decision on the content remain the sole responsibility of the human authors.


\bibliography{acl_latex.bbl}

\appendix
\section{Appendix}
\label{sec:appendix}

\subsection{Derivation of the Ensemble Error}
\label{app:proof_hoeffding}
Here, we provide the derivation of the exponential error bound for the \aegis{} ensemble under \textbf{Hard Voting} regime, as presented in Eq.~\ref{eq:error_bound}. For a benign input $x$, let $X_l$ be the binary random variable representing the decision of the $l$-th layer's detector:
\begin{equation}
    X_l = \phi_{hard}(S_l(x)) = \mathbb{I}(S_l(x) > b_l)
\end{equation}
where $X_l=1$ indicates a False Positive (FP) at layer $l$, and $X_l=0$ indicates a correct rejection.

\paragraph{Assumption 1 (Bounded Single-Layer Error).}
Due to the distribution-aware calibration described in Sec.~\ref{sec:methodology}, the FPR of each individual layer is strictly bounded by $\beta$:
\begin{equation}
    P(X_l = 1) = \mathbb{E}[X_l] \le \beta
\end{equation}

\paragraph{Assumption 2 (Approximate Independence).}
We model the layer-wise decisions $\{X_l\}_{l=1}^L$ as independent Bernoulli random variables. While deeper layers in LLMs exhibit functional dependencies, non-adjacent layers often capture distinct semantic features~\cite{zhang2024investigating}. This independence assumption provides a theoretical upper bound for the worst-case analysis.

We aim to bound the Ensemble FPR, $Q_{ens}$, defined as the probability that the sum of votes exceeds the consensus quorum $\Gamma$:
\begin{equation}
    Q_{ens} = P\left( \sum_{l=1}^{L} X_l \ge \Gamma \right)
\end{equation}

Let $S_L = \sum_{l=1}^{L} X_l$ be the total number of positive votes. Under the independence assumption, $S_L$ follows a sum of independent bounded random variables. The number of false positives is:
\begin{equation}
    \mathbb{E}[S_L] = \sum_{l=1}^{L} \mathbb{E}[X_l] \le L\beta
\end{equation}
We apply \textbf{Hoeffding's Inequality}~\cite{hoeffding1963probability}, which provides an upper bound on the probability that the sum of bounded independent random variables deviates from its expected value by more than a certain amount $t$.

\begin{lemma}[Hoeffding's Inequality]
Let $X_1, \dots, X_n$ be independent random variables such that $a_i \le X_i \le b_i$. Let $S_n = \sum X_i$. Then for any $t > 0$:
\begin{equation}
    P(S_n - \mathbb{E}[S_n] \ge t) \le \exp\left( -\frac{2t^2}{\sum_{i=1}^n (b_i - a_i)^2} \right)
\end{equation}
\end{lemma}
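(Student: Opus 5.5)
This is the standard Hoeffding inequality, so we prove it by the Chernoff--Cramér exponential-moment method. The key input is Hoeffding's lemma: if $Y$ satisfies $a \le Y \le b$ and $\mathbb{E}[Y]=0$, then $\mathbb{E}[e^{\lambda Y}] \le \exp(\lambda^2 (b-a)^2/8)$ for every $\lambda \in \mathbb{R}$. With that in hand, the plan is Markov's inequality, then a product factorization using independence, then a single optimization over $\lambda$.

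\textbf{Step 1 (Chernoff bound).} Set $Y_i = X_i - \mathbb{E}[X_i]$, so each $Y_i$ is centred and takes values in an interval of length $b_i - a_i$. For any $\lambda > 0$, Markov's inequality applied to $e^{\lambda(S_n - \mathbb{E}[S_n])}$ gives
\begin{equation*}
P(S_n - \mathbb{E}[S_n] \ge t) \le e^{-\lambda t}\,\mathbb{E}\Big[e^{\lambda \sum_i Y_i}\Big] = e^{-\lambda t}\prod_{i=1}^n \mathbb{E}\big[e^{\lambda Y_i}\big].
\end{equation*}
The equality uses independence of the $X_i$, and hence of the $Y_i$.

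\textbf{Step 2 (Hoeffding's lemma).} This is the main obstacle. Let $Y$ be centred with $a \le Y \le b$, and define $\psi(\lambda) = \log \mathbb{E}[e^{\lambda Y}]$.
\begin{itemize}
\item Clearly $\psi(0) = 0$, and $\psi'(0) = \mathbb{E}[Y] = 0$.
\item $\psi''(\lambda)$ is the variance of $Y$ under the exponentially tilted measure $dQ_\lambda \propto e^{\lambda Y}\, dP$.
\item Any random variable supported in $[a,b]$ has variance at most $(b-a)^2/4$, since $\mathrm{Var}(Z) \le \mathbb{E}[(Z - \tfrac{a+b}{2})^2] \le (b-a)^2/4$. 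Hence $\psi''(\lambda) \le (b-a)^2/4$ for all $\lambda$.
\item Taylor's theorem with remainder then gives $\psi(\lambda) \le \lambda^2 (b-a)^2/8$.
\end{itemize}
A fallback route is to bound $e^{\lambda y}$ by the chord through the endpoints, which is valid by convexity. Taking expectations reduces the claim to showing that $L(h) = -hp + \log(1 - p + p e^{h}) \le h^2/8$, with $p = -a/(b-a)$ and $h = \lambda(b-a)$. This follows from $L(0) = L'(0) = 0$ and $L'' \le 1/4$.

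\textbf{Step 3 (Optimize).} Combining Steps 1 and 2 yields
\begin{equation*}
P(S_n - \mathbb{E}[S_n] \ge t) \le \exp\Big(-\lambda t + \tfrac{\lambda^2}{8}\sum_i (b_i-a_i)^2\Big).
\end{equation*}
The exponent is a quadratic in $\lambda$, minimized at $\lambda^* = 4t/\sum_i (b_i-a_i)^2 > 0$. Substituting $\lambda^*$ gives the exponent $-2t^2/\sum_i (b_i-a_i)^2$, which is the stated bound.
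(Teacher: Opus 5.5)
Your proof is correct and complete. The Chernoff step, the factorization by independence, the bound $\psi''(\lambda)\le (b-a)^2/4$ via the tilted measure, and the optimization at $\lambda^*=4t/\sum_i(b_i-a_i)^2$ all check out. Your fallback also works: the chord bound, with $L''(h)=u(1-u)\le 1/4$ where $u=pe^h/(1-p+pe^h)$.

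The paper does not prove this lemma. It cites Hoeffding (1963) and uses the inequality as a black box to obtain Eq.~(\ref{eq:error_bound}) with $t=\Gamma-L\beta$. So there is no competing argument to compare against. Your fallback chord argument is essentially Hoeffding's original proof, and your main tilted-variance argument is the standard modern variant of it.

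Writing the proof out does isolate exactly where independence enters. It is used only in the identity $\mathbb{E}[e^{\lambda\sum_i Y_i}]=\prod_i\mathbb{E}[e^{\lambda Y_i}]$ of your Step 1. The bound therefore survives any weaker hypothesis that gives $\le$ there, for example negative association of the layer votes. It does not survive positive correlation between layers. If all monitored layers voted identically, $Q_{ens}$ would equal the single-layer rate $P(X_1=1)$ for every quorum $\Gamma$, so it would not decay with $|\mathcal{L}_{vote}|$. This is the dependence issue the paper acknowledges in its Limitations section.
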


In our specific case: (1) The variables are binary, so $a_i=0, b_i=1$, and $(b_i-a_i)^2 = 1$; (2) The denominator becomes $\sum_{i=1}^L 1 = L$; (3) We are interested in the probability of the sum exceeding $\Gamma$. We rewrite the event $S_L \ge \Gamma$ in terms of deviation from the expectation $L\beta$.

Let $t = \Gamma - L\beta$. We assume the voting threshold is set higher than the expected single-layer error rate (i.e., $\Gamma > L\beta$), ensuring $t > 0$. We substitute these into Hoeffding's bound:

\begin{equation}
\begin{aligned}
    Q_{ens} &= P(S_L \ge \Gamma) \\
            &= P(S_L - L\beta \ge \Gamma - L\beta) \\
            &\le P(S_L - \mathbb{E}[S_L] \ge \Gamma - L\beta)
\end{aligned}
\end{equation}

Applying the Lemma with $t = \Gamma - L\beta$:

\begin{equation}
    Q_{ens} \le \exp\left( -\frac{2(\Gamma - L\beta)^2}{L} \right)
\end{equation}

To provide a more intuitive form related to the \textit{voting ratio} versus the \textit{error rate}, we factor out $L$ from the numerator term $(\Gamma - L\beta)^2$:

\begin{equation}
\begin{aligned}
    (\Gamma - L\beta)^2 &= \left( L \left( \frac{\Gamma}{L} - \beta \right) \right)^2 \\
    &= L^2 \left( \frac{\Gamma}{L} - \beta \right)^2
\end{aligned}
\end{equation}

Substituting this back into the exponent:

\begin{equation}
    \begin{aligned}
    Q_{ens} &\le \exp\left( -\frac{2 \cdot L^2 \left( \frac{\Gamma}{L} - \beta \right)^2}{L} \right) \\
            &= \exp\left( -2L \left( \frac{\Gamma}{L} - \beta \right)^2 \right)
    \end{aligned}
\end{equation}

Noting that $L = |\mathcal{L}_{vote}|$, we arrive at the final bound presented in the main text:

\begin{equation}
    Q_{ens} \le \exp\left( -2 |\mathcal{L}_{vote}| \left( \frac{\Gamma}{|\mathcal{L}_{vote}|} - \beta \right)^2 \right)
\end{equation}

\hfill $\blacksquare$

\textbf{Remark on Convergence.}
This result demonstrates that the ensemble error rate decays exponentially with the number of layers $|\mathcal{L}_{vote}|$. The decay rate is controlled by the square of the \textbf{Safety Margin} $(\frac{\Gamma}{L} - \beta)$. As long as the consensus requirement (e.g., $\Gamma/L = 0.5$) is distinct from the single-layer noise floor ($\beta \approx 0.1$), the probability of a false alarm vanishes as more layers are added.

\subsection{Stability of Soft Voting}
\label{app:stable_soft}
In the \textbf{Soft Voting} regime ($\phi_{soft}$), \aegis{} operates as a signal accumulator. Consider a stealthy attack that induces a weak activation $\mu_{adv}$ at each layer, submerged in background noise $\sigma_{noise}$. While the single-layer Signal-to-Noise Ratio (SNR) is $\mu_{adv}/\sigma_{noise}$, the ensemble sums the consistent semantic signal linearly ($\propto |\mathcal{L}|$) while the independent noise components sum orthogonally ($\propto \sqrt{|\mathcal{L}|}$). The ensemble SNR scales by $\sqrt{|\mathcal{L}|}$, empowering \aegis{} to detect micro-perturbations that are statistically indistinguishable from noise in any single projection but conspicuous in the aggregate.

\subsection{Poof of Ensemble Robustness Radius}
\begin{proposition}[Ensemble Latent Robustness Radius]
    Let $\mathcal{R}(x) = \{R_1(x), R_2(x), \dots, R_{|\mathcal{L}|}(x)\}$ be the set of exact latent radii for all monitored layers calculated via Eq.~(\ref{eq:exact_radius}). Let $R_{(\Gamma)}$ denote the $\Gamma$-th smallest value in $\mathcal{R}(x)$ (i.e., the $\Gamma$-th order statistic).
    The robustness radius of the \aegis{} ensemble, $R_{ens}(x)$, is determined by:
    \begin{equation}
        R_{ens}(x) = R_{(\Gamma)}
    \end{equation}
    Any latent perturbation $\delta$ with $\|\delta\|_2 < R_{(\Gamma)}$ is mathematically guaranteed to preserve the safe classification of the ensemble.
\end{proposition}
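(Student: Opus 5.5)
The plan is to split the statement into two parts. The first is a \emph{soundness} part: any $\delta$ with $\|\delta\|_2 < R_{(\Gamma)}$ preserves the safe decision. The second is a \emph{tightness} part: the radius $R_{(\Gamma)}$ cannot be enlarged, which is what justifies ``exactly determined''.

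Throughout, I model the Latent Adversary as in Sec.~\ref{subsec:theoretical_robustness}. The perturbation is added directly to each monitored hidden state $h^{(l)}(x)$, and I ignore propagation of $\delta$ from one monitored layer to later ones. The budget is the Euclidean norm of the perturbation applied at each layer. I also assume the benign input $x$ is classified safe at every monitored layer, i.e.\ $R_l(x) > 0$. Otherwise I would set $R_l(x)=0$ for layers already voting, and the same argument goes through with the order statistic counting them.

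For soundness I would argue layer by layer using Eq.~(\ref{eq:exact_radius}). Since $\|w_l^*\|_2=1$, Cauchy--Schwarz gives
\[
\langle h^{(l)}(x)+\delta, w_l^*\rangle \le S_l(x) + \|\delta\|_2 .
\]
Hence layer $l$ can flip to a positive vote only if $\|\delta\|_2 \ge b_l - S_l(x) = R_l(x)$. Now suppose $\|\delta\|_2 < R_{(\Gamma)}$. Every layer that flips then satisfies $R_l(x) \le \|\delta\|_2 < R_{(\Gamma)}$. By the definition of the $\Gamma$-th order statistic, at most $\Gamma-1$ indices have $R_l(x) < R_{(\Gamma)}$, and this count remains valid when there are ties at the value $R_{(\Gamma)}$. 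Therefore $\mathcal{S}_{ens}(x+\delta) = \sum_l \phi_{hard}(\cdot) \le \Gamma - 1 < \Gamma$, and the ensemble stays safe. This establishes the guarantee and $R_{ens}(x) \ge R_{(\Gamma)}$.

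For tightness I would pick the $\Gamma$ layers attaining the $\Gamma$ smallest radii. On each such layer I set $\delta_l = (R_l(x)+\eta)\, w_l^*$ with $\eta \downarrow 0$, and $\delta_l=0$ on the remaining layers. Each chosen layer then crosses its hyperplane. The margin $\eta$ handles the strict inequality in $\phi_{hard}$ versus the $\ge$ convention in Eq.~(\ref{eq:exact_radius}). The largest perturbation norm used is $R_{(\Gamma)}+\eta$, so $\Gamma$ votes are reached and $R_{ens}(x) \le R_{(\Gamma)}$.

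I expect this tightness step to be the main obstacle, because it depends on the threat model. The construction works when the adversary may choose the perturbation independently at each layer and the radius is measured as the maximal per-layer norm. If instead a \emph{single} shared $\delta$ must simultaneously cross $\Gamma$ hyperplanes with different normals $w_l^*$, the minimal norm is in general strictly larger than $R_{(\Gamma)}$. In that case only the soundness inequality survives, as a lower bound, which is consistent with the Geometric Interpretation following the proposition. I would therefore state the per-layer adversary explicitly to obtain equality, and remark that in the shared-$\delta$ model the proposition should be read as $R_{ens}(x) \ge R_{(\Gamma)}$.
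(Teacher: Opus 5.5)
Your soundness half is the paper's entire proof. The paper fixes $\|\delta\|_2 < R_{(\Gamma)}$, bounds $\langle h^{(l)}+\delta, w_l^*\rangle \le S_l(x)+\|\delta\|_2$ using $\|w_l^*\|_2=1$, and uses the order statistic to conclude that at most $\Gamma-1$ layers can reach their thresholds. The paper never argues the reverse inequality. So despite the phrase ``exactly determined'', it only establishes $R_{ens}(x)\ge R_{(\Gamma)}$; your tightness half has no counterpart there.

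Your caveat about the tightness half is correct. The paper's proof adds the \emph{same} $\delta$ to every $h^{(l)}$. In that shared-$\delta$ model the true radius is $\min_{|T|=\Gamma}\min\{\|\delta\|_2 : \langle\delta,w_l^*\rangle\ge R_l(x)\ \text{for all } l\in T\}$, and this can strictly exceed $R_{(\Gamma)}$. With two layers, $\Gamma=2$, orthogonal unit normals and $R_1=R_2=1$, it equals $\sqrt{2}$. With opposite normals, no $\delta$ works at all. The paper's own ``Geometric Interpretation'' paragraph is gesturing at exactly this.

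So equality needs your per-layer adversary. It also depends on your choice to charge the budget as the maximum per-layer norm. Charging the perturbations jointly, e.g.\ by $(\sum_l\|\delta_l\|_2^2)^{1/2}$, would give $(\sum_{i\le\Gamma}R_{(i)}^2)^{1/2}$ instead.

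Your $\eta$ device handles the mismatch between the strict inequality in $\phi_{hard}$ and the $\ge$ in the radius definition, a detail the paper skips. It means $R_{ens}(x)$ is an infimum rather than an attained minimum.

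With the per-layer, max-norm threat model made explicit, your two-part argument proves the equality. The paper's proof supports only the final guarantee sentence of the statement.
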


\begin{proof}
    Consider an arbitrary perturbation $\delta$ with norm $\|\delta\|_2 < R_{(\Gamma)}$.
    By the definition of order statistics, there exist at most $\Gamma - 1$ layers where the layer-specific safety margin $R_l(x)$ is smaller than or equal to $\|\delta\|_2$.
    For all other layers (at least $|\mathcal{L}| - \Gamma + 1$ layers), the perturbation is insufficient to cross the decision boundary:
\begin{equation}
\begin{aligned}
    \langle h^{(l)} + \delta, w_l^* \rangle &\le \langle h^{(l)}, w_l^* \rangle + \|\delta\|_2 \\
    &< S_l(x) + R_l(x) = b_l
\end{aligned}
\end{equation}
    Consequently, the number of positive votes is strictly less than the quorum $\Gamma$. The ensemble decision function $\mathbb{I}(\sum v_l \ge \Gamma)$ evaluates to 0, ensuring the input remains classified as safe.
\end{proof}

\begin{figure*}
    \centering
    \includegraphics[width=0.78\linewidth]{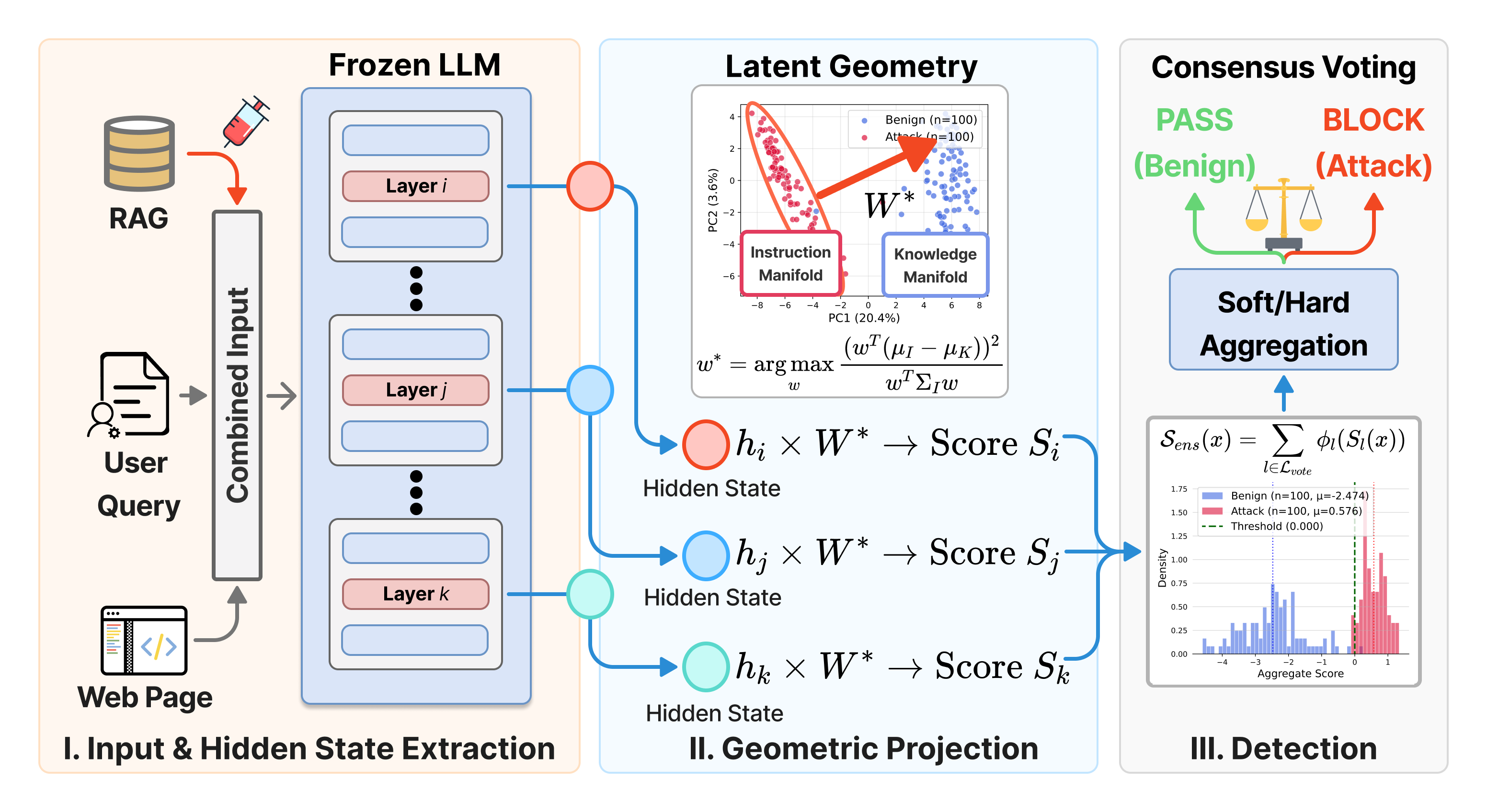}
    \caption{Conceptual illustration of the \aegis{}.}
    \label{fig:framework}
\end{figure*}

\subsection{Details of \aegis{}}
\label{app:aegis_detail}
We conduct our experiments using PyTorch 2.4.0. All experiments are performed on a single NVIDIA A100-8G GPU.
For the default setting of Qwen3-4B in the main experiments, we select the start position as \textbf{0.5} (representing the relative network depth, e.g., layer 16 for a 32-layer model) and employ a voting window of $|\mathcal{L}_{vote}| = 3$ layers. For Llama3-8B and Llama3.1-8B, we select the start position as \textbf{0.5} and employ a voting window of $|\mathcal{L}_{vote}| = 7$ layers. By default, we utilize the \textbf{Soft Voting} mechanism as the default aggregation strategy. The covariance matrices $\Sigma_I$ and projectors are estimated using a computationally efficient calibration set consisting of only 200 benign samples and 200 malicious samples randomly sampled from the training corpus. The threshold to limit the impact on utility strictly bration was selected by 0.05 FPR on the hold-out benign set (training set). The selected threshold was applied to all test datasets and attacks all the time. Note that there are no sample overlaps between the training corpus and the test set. The evaluations for both \aegis{} and all baselines were run 5 times with different random seeds, and we report the mean value of the final results.

\subsection{Implementation of Baselines}
\label{app:baseline_detail}
We compare \aegis{} against a comprehensive suite of industrial guardrails and research-based detection methods. All baselines are implemented following their official repositories or standard configurations described in their respective papers. For all methods, thresholds are tuned on the same validation benign set to achieve target FPR (1\% unless otherwise stated). Test FPR/FNR are reported on a disjoint test set.

\noindent\textbf{External Classifiers.}
\begin{itemize}
    \item \textbf{PIBert}~\cite{pibert}: We utilize the \texttt{fmops/distilbert-prompt-injection} checkpoint, a DistilBERT model fine-tuned specifically for detecting prompt injection attacks.
    \item \textbf{DeBERTa}~\cite{deberta-pi}: We employ the deberta-v3-base-prompt-injection model, which is optimized for identifying heuristic injection patterns.
    \item \textbf{PromptGuard (PrGuard)}~\cite{PromptGuard}: We use \texttt{meta-llama/Prompt-Guard-86M}, a lightweight classifier designed to protect LLM-based applications from malicious inputs.
\end{itemize}
For all external classifiers, we apply their original (default) classification threshold. Note that this is also the optimal setting given by the official recommendation.

\noindent\textbf{Perplexity-based Detectors.} Following Alon et al.~\cite{alon2023detecting}, we implement two variants based on the hypothesis that adversarial inputs exhibit anomalous perplexity:
\begin{itemize}
    \item \textbf{PPL\textsuperscript{1}}: Uses GPT-2-Small as the reference model to calculate windowed perplexity.
    \item \textbf{PPL\textsuperscript{2}}: Uses Vicuna-7B as a more capable reference model.
\end{itemize}

\noindent\textbf{Internal \& LLM-based Defenses.}
\begin{itemize}
    \item \textbf{InjecGuard}~\cite{li2024injecguard}: We leverage the official opensourced checkpoint and code for implementation.
    \item \textbf{Attention Tracker (AttTracker)}~\cite{hung2025attention}: We implement the attention monitoring mechanism that flags inputs when the attention weights assigned to the external context exceed a specified threshold, indicating a ``hijacking'' of the model's focus.
    \item \textbf{LLM Naive}~\cite{llm-naive}: We employ a ``defense-via-prompting'' approach where a separate LLM (Vicuna-7B-V1.3) serves as a judge. The model is prompted with a system instruction to classify the user input.
    \item \textbf{Known-Answer (KA)}~\cite{Known-Answer}: We verify the integrity of the system prompt by appending a secret ``canary'' token sequence. If the model's output fails to reproduce the canary correctly (indicating the instruction was overridden), the input is flagged as an attack.
    \item \textbf{PIShield}~\cite{zou2025pishield}: We implement this defense by extracting the internal representation of the final prompt token from a specific ``injection-critical layer'' (middle layer in this paper). A simple linear classifier (MLP) is trained on these latent features to distinguish between clean and contaminated prompts based on their intrinsic spectral differences. Note that for fair comparison, we do not include all the training data (20000 samples) and align with \aegis{} using 200 samples. This might be why PIShield starkly contradicts the original results. Note that we also have to admit that PIShield also exhibits good performance (close but lower than \aegis{}) when full datasets are leveraged for training, but this validate the data efficiency of \aegis{} over PIShield as well.
\end{itemize}

\begin{figure*}
    \centering
    \includegraphics[width=1\linewidth]{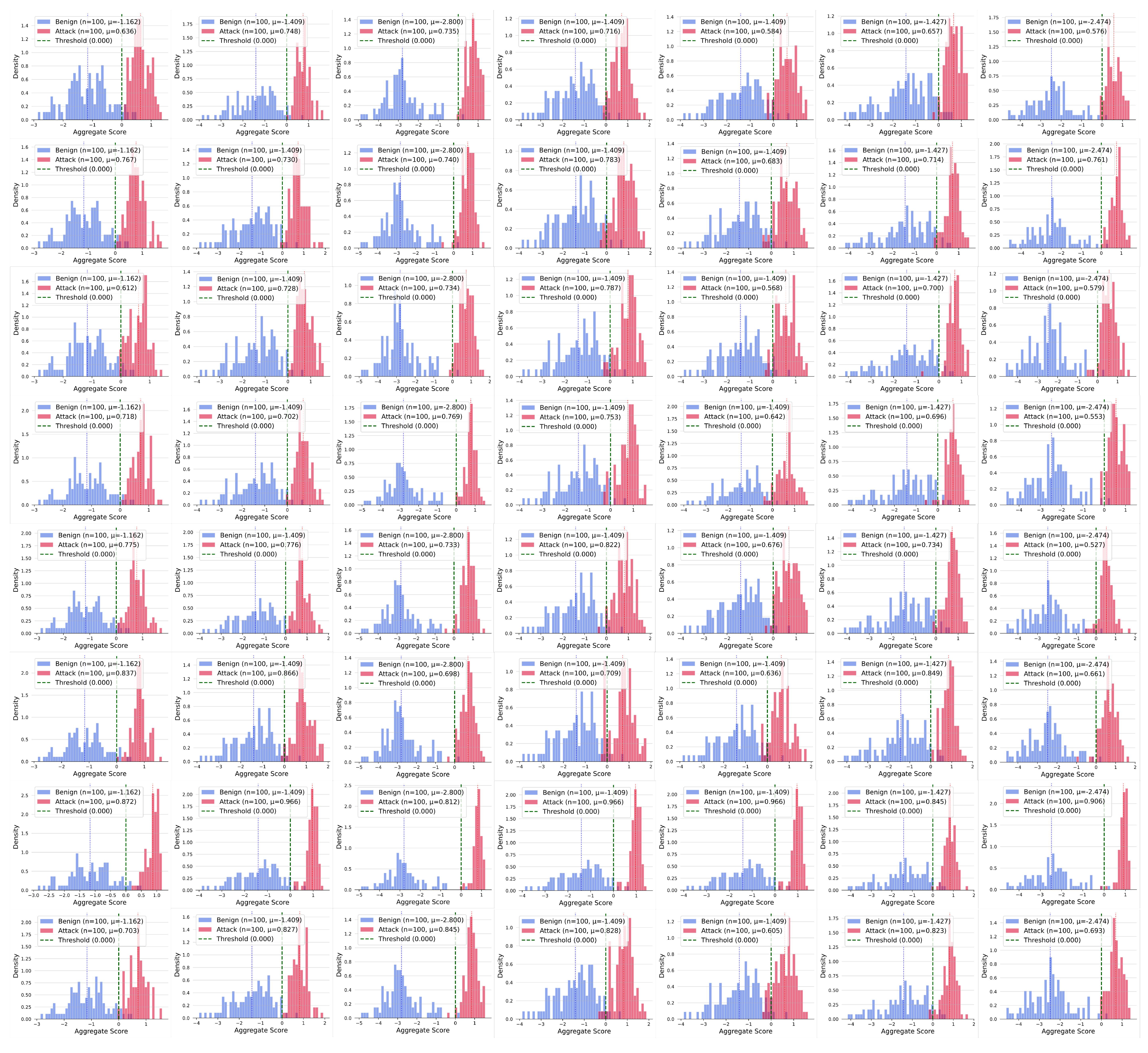}
    \caption{Distribution of the soft voting score by \aegis{}. From top to bottom, the IPI attacks are: \texttt{Combined Attack, Escape Character, Fake Completion, Context Ignoring, Naive Attack, NeuralExec, PLeak} and \texttt{Universal Injection}. From left to right, the corresponding datasets are: \texttt{gigaword, hsol, jfleg, mrpc, rte, sms-spam} and \texttt{sst2}}
    \label{fig:hist_distribution}
\end{figure*}

\begin{table*}[t]
\centering
\setlength{\tabcolsep}{0.8mm}
\renewcommand{\arraystretch}{1.2}
\aboverulesep=0ex
\belowrulesep=0.5ex
\resizebox{1.0\linewidth}{!}{
\begin{tabular}{cccccccccccccccccccc}
\toprule[2pt]
\multirow{3}{*}{\textbf{Attack} $\downarrow$} & \textbf{Model} $\rightarrow$ & \multicolumn{9}{c}{\textbf{Llama3-8B}} & \multicolumn{9}{c}{\textbf{Llama3.1-8B}} \\ \cmidrule(lr){3-11} \cmidrule(lr){12-20} 
 & \textbf{Start Ratio} $\rightarrow$ & \multicolumn{3}{c}{0.5} & \multicolumn{3}{c}{0.67} & \multicolumn{3}{c}{0.6} & \multicolumn{3}{c}{0.5} & \multicolumn{3}{c}{0.67} & \multicolumn{3}{c}{0.6} \\ \cmidrule(lr){3-5}\cmidrule(lr){6-8}\cmidrule(lr){9-11}\cmidrule(lr){12-14}\cmidrule(lr){15-17}\cmidrule(lr){18-20}
 & \textbf{Num Layers} $\rightarrow$ & 3 & 5 & 7 & 3 & 5 & 7 & 3 & 5 & 7 & 3 & 5 & 7 & 3 & 5 & 7 & 3 & 5 & 7 \\ \midrule
\multirow{3}{*}{\makecell{\textbf{Combined}\\ \textbf{Attack}}} & ACC $\uparrow$ & 0.9712 & 0.9736 & 0.9761 & 0.9713 & 0.9640 & 0.9739 & 0.9705 & 0.9647 & 0.9604 & 0.8823 & 0.9689 & 0.9704 & 0.9564 & 0.9521 & 0.9549 & 0.9727 & 0.9684 & 0.9609 \\
 & FPR $\downarrow$ & 0.0575 & 0.0525 & 0.0475 & 0.0540 & 0.0700 & 0.0500 & 0.0550 & 0.0680 & 0.0780 & 0.0120 & 0.0180 & 0.0200 & 0.0260 & 0.0300 & 0.0340 & 0.0240 & 0.0220 & 0.0280 \\
 & FNR $\downarrow$ & 0.0000 & 0.0004 & 0.0004 & 0.0034 & 0.0020 & 0.0021 & 0.0039 & 0.0026 & 0.0011 & 0.2234 & 0.0443 & 0.0391 & 0.0611 & 0.0657 & 0.0563 & 0.0306 & 0.0411 & 0.0503 \\ \hline
\multirow{3}{*}{\makecell{\textbf{Escape}\\ \textbf{Character}}} & ACC $\uparrow$ & 0.9707 & 0.9727 & 0.9752 & 0.9696 & 0.9623 & 0.9725 & 0.9700 & 0.9639 & 0.9589 & 0.9473 & 0.9800 & 0.9746 & 0.9633 & 0.9601 & 0.9624 & 0.9724 & 0.9697 & 0.9659 \\
 & FPR $\downarrow$ & 0.0575 & 0.0525 & 0.0475 & 0.0540 & 0.0700 & 0.0500 & 0.0550 & 0.0680 & 0.0780 & 0.0120 & 0.0180 & 0.0200 & 0.0260 & 0.0300 & 0.0340 & 0.0240 & 0.0220 & 0.0280 \\
 & FNR $\downarrow$ & 0.0011 & 0.0021 & 0.0021 & 0.0069 & 0.0054 & 0.0050 & 0.0050 & 0.0043 & 0.0043 & 0.0934 & 0.0220 & 0.0309 & 0.0474 & 0.0497 & 0.0411 & 0.0311 & 0.0386 & 0.0403 \\ \hline
\multirow{3}{*}{\makecell{\textbf{Fake}\\ \textbf{Completion}}} & ACC $\uparrow$ & 0.9709 & 0.9732 & 0.9759 & 0.9701 & 0.9641 & 0.9739 & 0.9713 & 0.9647 & 0.9606 & 0.9196 & 0.9714 & 0.9706 & 0.9583 & 0.9544 & 0.9576 & 0.9710 & 0.9656 & 0.9606 \\
 & FPR $\downarrow$ & 0.0575 & 0.0525 & 0.0475 & 0.0540 & 0.0700 & 0.0500 & 0.0550 & 0.0680 & 0.0780 & 0.0120 & 0.0180 & 0.0200 & 0.0260 & 0.0300 & 0.0340 & 0.0240 & 0.0220 & 0.0280 \\
 & FNR $\downarrow$ & 0.0007 & 0.0011 & 0.0007 & 0.0057 & 0.0017 & 0.0021 & 0.0025 & 0.0026 & 0.0009 & 0.1489 & 0.0391 & 0.0389 & 0.0574 & 0.0611 & 0.0509 & 0.0340 & 0.0469 & 0.0509 \\ \hline
\multirow{3}{*}{\makecell{\textbf{Context}\\ \textbf{Ignore}}} & ACC $\uparrow$ & 0.9707 & 0.9727 & 0.9752 & 0.9709 & 0.9641 & 0.9734 & 0.9711 & 0.9647 & 0.9600 & 0.9229 & 0.9753 & 0.9707 & 0.9596 & 0.9571 & 0.9604 & 0.9723 & 0.9683 & 0.9621 \\
 & FPR $\downarrow$ & 0.0575 & 0.0525 & 0.0475 & 0.0540 & 0.0700 & 0.0500 & 0.0550 & 0.0680 & 0.0780 & 0.0120 & 0.0180 & 0.0200 & 0.0260 & 0.0300 & 0.0340 & 0.0240 & 0.0220 & 0.0280 \\
 & FNR $\downarrow$ & 0.0011 & 0.0021 & 0.0021 & 0.0043 & 0.0017 & 0.0032 & 0.0029 & 0.0026 & 0.0020 & 0.1423 & 0.0314 & 0.0386 & 0.0549 & 0.0557 & 0.0451 & 0.0314 & 0.0414 & 0.0477 \\ \hline
\multirow{3}{*}{\makecell{\textbf{Naive}\\ \textbf{Attack}}} & ACC $\uparrow$ & 0.9705 & 0.9727 & 0.9752 & 0.9691 & 0.9624 & 0.9723 & 0.9704 & 0.9636 & 0.9590 & 0.9487 & 0.9801 & 0.9754 & 0.9657 & 0.9631 & 0.9644 & 0.9741 & 0.9726 & 0.9680 \\
 & FPR $\downarrow$ & 0.0575 & 0.0525 & 0.0475 & 0.0540 & 0.0700 & 0.0500 & 0.0550 & 0.0680 & 0.0780 & 0.0120 & 0.0180 & 0.0200 & 0.0260 & 0.0300 & 0.0340 & 0.0240 & 0.0220 & 0.0280 \\
 & FNR $\downarrow$ & 0.0014 & 0.0021 & 0.0021 & 0.0077 & 0.0051 & 0.0054 & 0.0043 & 0.0049 & 0.0040 & 0.0906 & 0.0217 & 0.0291 & 0.0426 & 0.0437 & 0.0371 & 0.0277 & 0.0329 & 0.0360 \\ \hline
\multirow{3}{*}{\textbf{NeuralExec}} & ACC $\uparrow$ & 0.9705 & 0.9721 & 0.9746 & 0.9676 & 0.9606 & 0.9696 & 0.9679 & 0.9619 & 0.9584 & 0.9047 & 0.9731 & 0.9731 & 0.9653 & 0.9609 & 0.9629 & 0.9721 & 0.9731 & 0.9646 \\
 & FPR $\downarrow$ & 0.0575 & 0.0525 & 0.0475 & 0.0540 & 0.0700 & 0.0500 & 0.0550 & 0.0680 & 0.0780 & 0.0120 & 0.0180 & 0.0200 & 0.0260 & 0.0300 & 0.0340 & 0.0240 & 0.0220 & 0.0280 \\
 & FNR $\downarrow$ & 0.0014 & 0.0032 & 0.0032 & 0.0109 & 0.0089 & 0.0107 & 0.0093 & 0.0083 & 0.0051 & 0.1786 & 0.0357 & 0.0337 & 0.0434 & 0.0483 & 0.0403 & 0.0317 & 0.0317 & 0.0429 \\ \hline
\multirow{3}{*}{\textbf{PLeak}} & ACC $\uparrow$ & 0.9700 & 0.9725 & 0.9737 & 0.9610 & 0.9550 & 0.9638 & 0.9700 & 0.9560 & 0.9520 & 0.9550 & 0.9890 & 0.9880 & 0.9850 & 0.9840 & 0.9820 & 0.9860 & 0.9870 & 0.9840 \\
 & FPR $\downarrow$ & 0.0575 & 0.0525 & 0.0475 & 0.0540 & 0.0700 & 0.0500 & 0.0550 & 0.0680 & 0.0780 & 0.0120 & 0.0180 & 0.0200 & 0.0260 & 0.0300 & 0.0340 & 0.0240 & 0.0220 & 0.0280 \\
 & FNR $\downarrow$ & 0.0025 & 0.0025 & 0.0050 & 0.0240 & 0.0200 & 0.0225 & 0.0050 & 0.0200 & 0.0180 & 0.0780 & 0.0040 & 0.0040 & 0.0040 & 0.0020 & 0.0020 & 0.0040 & 0.0040 & 0.0040 \\ \hline
\multirow{3}{*}{\makecell{\textbf{Universal}\\\textbf{Injection}}} & ACC $\uparrow$ & 0.9712 & 0.9734 & 0.9761 & 0.9666 & 0.9601 & 0.9707 & 0.9707 & 0.9627 & 0.9577 & 0.9119 & 0.9793 & 0.9769 & 0.9681 & 0.9599 & 0.9596 & 0.9747 & 0.9734 & 0.9659 \\
 & FPR $\downarrow$ & 0.0575 & 0.0525 & 0.0475 & 0.0540 & 0.0700 & 0.0500 & 0.0550 & 0.0680 & 0.0780 & 0.0120 & 0.0180 & 0.0200 & 0.0260 & 0.0300 & 0.0340 & 0.0240 & 0.0220 & 0.0280 \\
 & FNR $\downarrow$ & 0.0000 & 0.0007 & 0.0004 & 0.0129 & 0.0097 & 0.0086 & 0.0036 & 0.0066 & 0.0066 & 0.1643 & 0.0234 & 0.0263 & 0.0377 & 0.0503 & 0.0469 & 0.0266 & 0.0311 & 0.0403 \\ \bottomrule[1pt]
\end{tabular}}
\caption{Detection performance on Llama3-8B and Llama3.1-8B regarding different number of layers and position.}
\vspace{-8pt}
\label{tab:other_models}
\end{table*}

\end{document}